\newtheorem{theorem}{Theorem}
\acrodef{gs}[GC]{Ground Controller}
\acrodef{per}[PER]{Packet Error Rate}
\acrodef{sdr}[SDR]{Software-defined Radio}
\acrodef{sic}[SIC]{Successive Interference Cancellation}
\acrodef{iid}[i.i.d.]{independent and identically distributed}
\acrodef{adsb}[ADS-B]{Automatic Dependent Surveillance–Broadcast}
\acrodef{gps}[GPS]{Global Positioning System}
\acrodef{imm}[IMM]{Interacting Multiple Model}
\acrodef{psd}[PSD]{Power Spectral Density}
\acrodef{pmf}[PMF]{Probability Mass Function}
\acrodef{ppm}[PPM]{Pulse Position Modulation}
\acrodef{cfo}[CFO]{Carrier Frequency Offset}
\acrodef{awgn}[AWGN]{Additive White Gaussian Noise}
\acrodef{pdf}[PDF]{Probability Density Function}
\acrodef{atc}[ATC]{Air Traffic Control}
\acrodef{ads}[ADS]{Automatic Dependent Surveillance}
\acrodef{adsc}[ADS-C]{ADS-Contract}
\acrodef{kld}[KLD]{Kullback–Leibler Divergence}
\acrodef{poe}[POE]{Phase Offset Estimation}
\acrodef{po}[PO]{Phase Offset}
\acrodef{gm}[GM]{Gaussian Mixture}
\acrodef{mle}[MLE]{Maximum Likelihood Estimation}
\acrodef{em}[EM]{Expectation–Maximization}
\acrodef{ls}[LS]{Least Squares}
\acrodef{mse}[MSE]{Mean Squared Error}
\acrodef{snr}[SNR]{Signal-to-noise Ratio}
\acrodef{snrs}[SNRs]{Signal-to-Noise Ratios}
\acrodef{sop}[SoOp]{Signal-of-Opportunity}
\acrodef{imu}[IMU]{Inertial Measurement Unit}
\acrodef{gps}[GPS]{Global Positioning System}
\acrodef{ml}[ML]{Maximum Likelihood}
\begin{document}

\title{Joint Ranging and Phase Offset Estimation for Multiple Drones using ADS-B Signatures}

\author{
    Mostafa~Mohammadkarimi,~\IEEEmembership{Member,~IEEE},
    Geert Leus,~\IEEEmembership{Fellow,~IEEE}, \\ and
    Raj~Thilak~Rajan,~\IEEEmembership{Member,~IEEE}

%

\thanks{The authors are with the Faculty of Electrical Engineering,
Mathematics and Computer Science, Delft University of Technology, 2628 CD
Delft, The Netherlands (e-mail: m.mohammadkarimi@tudelft.nl, G.J.T.Leus@tudelft.nl, R.T.Rajan@tudelft.nl).
}
\thanks{This work is partially funded by the European Leadership Joint Undertaking (ECSEL JU), under grant agreement No 876019, and the ADACORSA project - ``Airborne Data Collection on Resilient System Architectures.” (https://adacorsa.eu/). }

\thanks{This paper published in IEEE Transactions on Vehicular Technology (DOI:10.1109/TVT.2023.3318192).}
}



\maketitle

\begin{abstract}
A new method for joint ranging and \ac{po} estimation of multiple drones/aircrafts is proposed in this paper.
The proposed method employs the superimposed uncoordinated \ac{adsb} packets broadcasted by drones/aircrafts for joint range and \ac{po} estimation.
It jointly estimates range and \ac{po} prior to ADS-B packet decoding; thus, it can improve air safety when packet decoding is infeasible due to packet collision. Moreover, it enables coherent detection of \ac{adsb} packets, which can result in more reliable multiple target tracking in aviation systems using cooperative sensors for detect and avoid (DAA).
By minimizing the \ac{kld} statistical distance measure, we show that the received complex baseband signal coming from $K$ uncoordinated drones/aircrafts corrupted by \ac{awgn} at a single antenna receiver can be approximated by an \ac{iid} \ac{gm} with $2^K$ mixture components in the two-dimensional (2D) plane. While direct joint \ac{mle} of range and \ac{po} from the derived \ac{gm} \ac{pdf} leads to an intractable maximization, our proposed method employs the \ac{em} algorithm to estimate the modes of the 2D Gaussian mixture followed by a reordering estimation technique through combinatorial optimization to estimate range and \ac{po}.
An extension to a multiple antenna receiver is also investigated in this paper. While the proposed estimator can estimate the range of multiple drones/aircrafts with a single receive antenna, a larger number of drones/aircrafts can be supported with higher accuracy by
the use of multiple antennas at the receiver.
The effectiveness of the proposed estimator is supported by simulation results. We show that the proposed estimator can jointly estimate the range of multiple drones/aircrafts accurately.
\end{abstract}

\begin{IEEEkeywords}
Range estimation, phase offset, cooperative navigation, expectation–maximization (EM), Gaussian mixture (GM), ADS-B, multiple receive antennas, detect and avoid (DAA).
\end{IEEEkeywords}

\acresetall		

\section{Introduction}
\IEEEPARstart{A}{utomatic}
Dependent Surveillance–Broadcast (ADS-B)
is one of the two Automatic Dependent Surveillance (ADS) systems that tracks aircrafts without employing radar.
It is intended to improve traffic surveillance capabilities by sharing accurate aircraft position information between pilots and air traffic controllers.
In the ADS-B system, aircrafts regularly and asynchronously broadcast their real-time position information, velocity, and identification to no specific receiver using a transponder, typically combined with a \ac{gps}, to transmit highly accurate positional information
to traffic controllers and directly to other aircrafts. This transmission is known as ADS-B Out and its accuracy and update rate
are much greater than conventional primary radar surveillance.
The reception of the ADS-B packet
 by an aircraft is ADS-B In \cite{strohmeier2014realities,tohidi2020compressed,kim2017ads}.


ADS-B system is considered a promising solution to enable safe autonomous drone navigation, especially in urban environments \cite{angelov2012sense}.
In order to avoid aviation accidents, each drone needs
to be aware of the position and speed of the surrounding
drones so that it can keep a safe separation distance with
the other drones. This safe separation can be achieved by a cooperative sensor system, such as the ADS-B system.  In this solution, drones are equipped with \ac{gps}, an \ac{imu}, and a miniaturized transponder and they broadcast their real-time position information, which can be employed by the surrounding drones or Ground Controllers (GCs) to maintain a safe operation distance of drones at low altitude and congested airspace.
In addition to ADS-B system,
 cooperative navigation by using
Wi-Fi and other  industrial, scientific, and medical (ISM) band wireless technologies   have been suggested  in \cite{minucci2020avoiding} and \cite{vinogradov2020wireless}. However, these solutions do not support long ranges.

Typically, ADS-B system is susceptible to severe
message collisions in dense air spaces. The
random channel access of the communication
protocols using the $1090$ MHz frequency
leads to ADS-B packet error rates above
50 percent for typical air space densities as
observed during the day \cite{strohmeier2014realities}.
One of the main challenges in the employment of a cooperative sensor system, such as ADS-B, for future drone technology is packet collisions due to a larger number of drones compared to aircrafts in the airspace. As the number of drones in the airspace increases, the probability of packet collision also increases.
The ADS-B system in its current form cannot handle packet collision; hence, a large number of packets are lost. Packet loss means less information and more uncertainty for the surrounding drones, resulting in less air safety.

Information extraction from collided and overlapping ADS-B packets, such as range, velocity,  Angle of Arrival (AoA), etc. can contribute to safer navigation of a drone/aircraft. These information can improve situational awareness and safety of the detect and avoid (DAA) systems.
Specifically, joint ranging and \ac{po} estimation enables coherent detection of a single or (collided) multiple ADS-B packets with a significantly lower \ac{per} compared to non-coherent detection.
To the best of the authors' knowledge, joint estimation of range and \ac{po} for multiple drones/aircrafts using collided ADS-B packets has not been investigated yet.
Existing blind source separation algorithms suffer from sign ambiguity and require the number of receive antennas not to be larger than the number of drones/aircrafts \cite{luo2018comprehensive,chabriel2014joint,naik2011overview}.

In this paper, we address the problem of joint range and \ac{po} estimation of multiple drones/aircrafts using the collided and overlapping ADS-B packets.
We analytically derive the \ac{ml} cost function for joint range and \ac{po} estimation of multiple drones/aircrafts. Then, a simple solution based on the \ac{em} algorithm is proposed.
Our proposed estimator enables avionic systems employing cooperative sensors to obtain range information of the surrounding drones/aircrafts while ADS-B packet decoding is not feasible due to collision.
Furthermore, the estimated range and \ac{po} can be employed for coherent detection of ADS-B packets, which offers higher detection performance. Our proposed estimator can
estimate the range of multiple drones/aircrafts with a single
receive antenna.

\subsection{Related Works}
Existing solutions for the separation of the overlapping ADS-B packets can be broadly divided into time-domain and spatial-domain methods \cite{li2021reliable}.
The spatial-domain methods take the advantage of antenna array and if the
direction of arrival of the aircrafts/drones signal are known,  the signal subspace methods, such as,  MUSIC \cite{schmidt1986multiple}, ESPRIT \cite{roy1989esprit}, and minimum variance distortionless response (MVDR) \cite{tzafri2016high}, can be employed for
ADS-B signal separation.
Projection
Algorithm (PA) and corresponding extensions for the separation of the secondary surveillance radar (SSR) signal have been proposed in \cite{petrochilos2004algorithms,petrochilos2005separation}.
ADS-B signal separation using high-order statistics of the received signal has shown to be
ineffective because the overlapping signal is pseudo-Gaussian \cite{petrochilos2009separation}. Furthermore,
Alternating Direction Method of
Multipliers (ADMM) has been suggested to solve the non-convex blind adaptive beamforming
problem for ADS-B signal separation in \cite{wang2019ads}.
The authors in \cite{petrochilos2007algebraic} showed that the performance of the blind source separation algorithms, such as, independent component
analysis (ICA) is not acceptable for ADS-B signal separation because of its short length.
Principal Component Analysis (PCA)  and Fast ICA algorithms for ADS-B signal separation have been proposed in
\cite{zhang2018optimization} and \cite{leonardi2019degarbling}.
A promising solution for ADS-B signal separation is Manchester decoding algorithm; however, as the delay between the reception of two ADS-B packets decreases, the performance significantly drops \cite{petrochilos2007algebraic}.

There are several works dealing with time-domain ADS-B signal separation.
The authors in \cite{dan2019single} have proposed to employ the empirical mode decomposition and ICA for the separation of overlapping ADS-B signals via single receive antenna.
The $K$-means clustering
for ADS-B signal separation based on
empirical mode decomposition and ICA has been developed in \cite{wu2017method}.
An anomaly doubt degree has been introduced in \cite{sunquan2018separation} to calculate signal overlap
time delay,
and adaptive threshold method
based on power difference has been employed to separate ADS-B signal when
the overlap signal is relatively large.
The problem of ADS-B signal separation using deep learning with a single receive antenna has been investigated in \cite{bi2022multi,wang2022single}.
The authors have shown that the separation accuracy of the deep learning based algorithm is higher than that of the traditional algorithms.
To the best of
the authors' knowledge,
the main issue with the above mentioned methods is that most of them can only separate two overlapping ADS-B signals.

\begin{figure}
\centering
\begin{subfigure}{0.45\textwidth}
    \includegraphics[width=\textwidth]{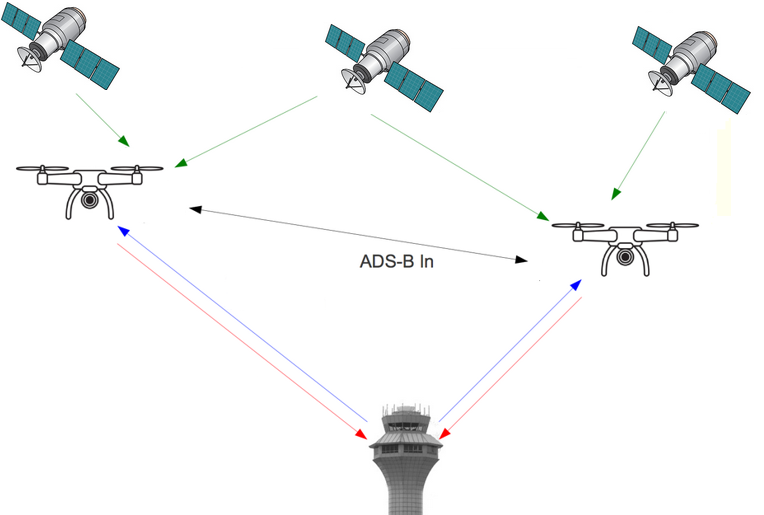}
    \caption{Range estimation in a \ac{gs}.}
    \label{fig:first}
\end{subfigure}
\\
\begin{subfigure}{0.45\textwidth}
    \includegraphics[width=\textwidth]{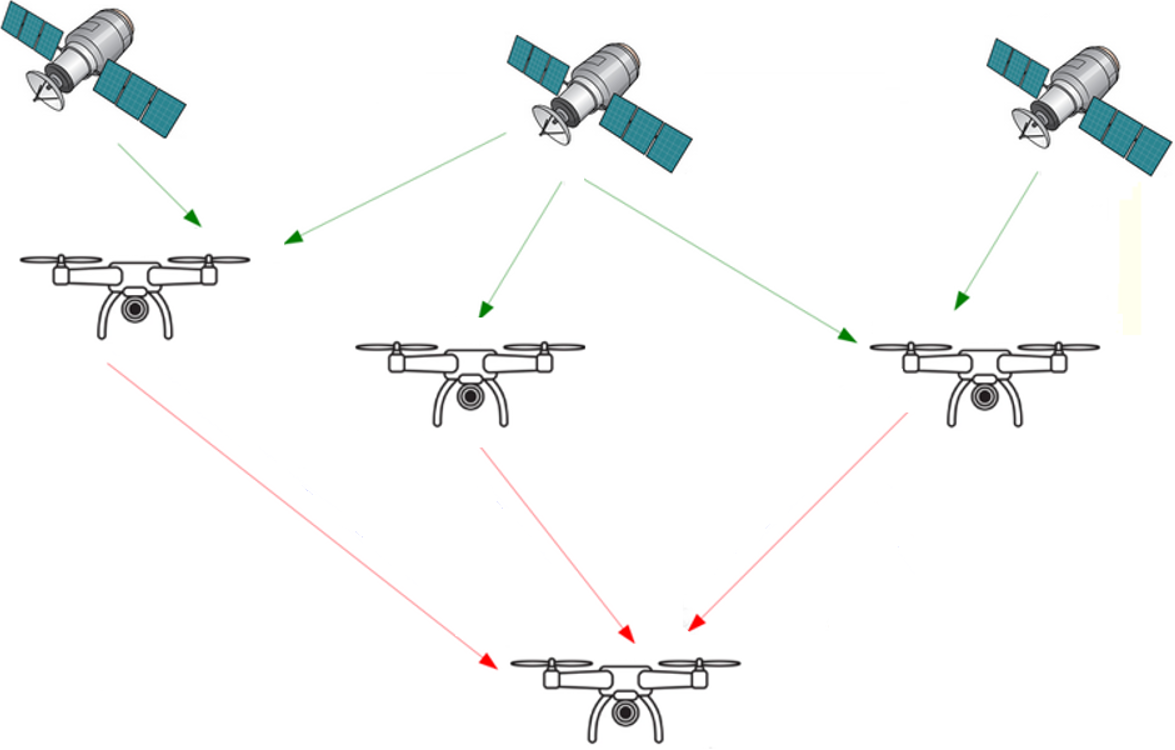}
    \caption{Range estimation in a flying drone.}
    \label{fig:second}
\end{subfigure}
\caption{Range estimation using the asynchronous ADS-B In signatures of the drones at the receiver.}
\label{fig:figures2}
\end{figure}

\subsection{Contributions}
Consider a collection of $K$ drones/aircrafts, which are asynchronously broadcasting ADS-B packets.
This simple transmission
protocol results in inevitable overlapping among multiple
ADS-B signals.
 In this work, we show that the received complex baseband signal from these drones/aircrafts can be approximated by an \ac{iid} \ac{gm} random variable with $2^K$ mixture components in the 2D plane, which are independent
of the arrival time of the ADS-B packets at the receiver.
Furthermore, by using the approximate \ac{pdf}, we derive the \ac{ml} cost function for the joint ranging and \ac{po} estimation of multiple drones/aircrafts.
We also propose the low-complexity \ac{em}-based joint ranging and \ac{po} estimation algorithm for
multiple drones/aircrafts. Our proposed estimator makes active multiple target ranging possible with a single receive antenna in the presence of ADS-B packet collision. Moreover, our solution enables ADS-B systems to coherent multi-packet decoding.
Finally, we extend the proposed \ac{em}-based joint estimator to the case of multiple receive antennas.

\subsection{Notations}
The identity matrix, all-zero vector, and all-one vector of length $N$ are denoted by ${\bf{I}}_N$, ${\bf{0}}_N$, and ${\bf{1}}_N$, respectively.
Throughout the paper, $(\cdot)^*$, $(\cdot)^{T}$, and $(\cdot)^{\rm{H}}$ show the complex conjugate, transpose, and Hermitian transpose, respectively. Also,
$| \cdot |$,  $\lfloor \cdot \rfloor$,   $*$,
 and $\otimes$  represent the absolute value operator, the floor function (greatest integer value),  linear convolution, and Kronecker product, respectively.
$\mathbb{E}\{\cdot\}$ is the statistical expectation, $\hat{{x}}$ is an estimate of $x$.
The complex Gaussian distribution with mean vector $\bm{\mu}$ and covariance matrix $\bm{\Sigma}$ is denoted by ${\cal{C}\cal{N}}\big{(}\bm{\mu},\bf{\Sigma}\big{)}$. The continuous uniform distribution between $a$ and $b$ and the discrete uniform distribution between $N_1$ and $N_2$ are denoted by $\mathcal{U}_{\rm c}[a,b]$ and $\mathcal{U}_{\rm d}[N_1,N_2]$, respectively.

The remainder of the paper is organized as follows. Section
\ref{system_model} introduces the system model. Section \ref{sec_III} describes the \ac{gm} distribution approximation by minimizing the
\ac{kld}.
In Section \ref{section_IV}, the maximum likelihood cost function and the \ac{em}-based joint ranging and \ac{po} estimation algorithm are analytically derived.
Reordering estimation for the proposed \ac{em}-based joint estimator through permutation-based combinatorial optimization
is investigated in Section \ref{Estimation_Map}. Joint estimation by taking the advantage of diversity gain through multiple receive antennas is discussed in Section \ref{mimo}.
Simulation results
are provided in Section \ref{simulation}, and conclusions are drawn in
Section \ref{conclusion}.


\begin{figure}[!t]
\centering
\includegraphics[width=3.5in]{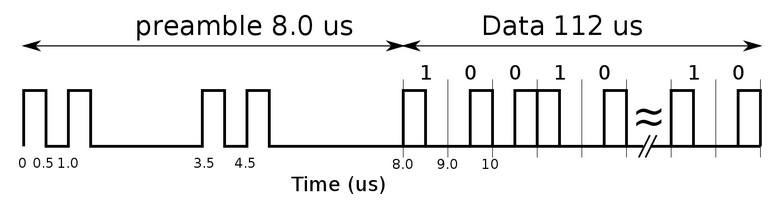}
\caption{An ADS-B packet is composed of a preamble and data in the \ac{ppm} form.}\label{Fig_2}
\end{figure}

\section{System Model}\label{system_model}
We consider $K$ drones broadcasting ADS-B packets through their
transponders to the \ac{gs}\footnote{The \ac{gs} can be a simple \ac{sdr} receiver. Number of drones/aircrafts, $K$, can be determined by employing model-order selection techniques \cite{shi2007adaptive,mohammadkarimi2017number} prior to joint ranging and \ac{po} estimation. } and also directly to other flying drones/aircrafts. Typical scenarios for signal reception at the \ac{gs} and the flying drone are shown in Fig.~\ref{fig:figures2}.
The packets are transmitted at $1090$ MHz and use \ac{ppm} at a rate of $1$ Mbit per second.


\begin{figure}
\centering
\begin{subfigure}{0.45\textwidth}
    \includegraphics[width=\textwidth]{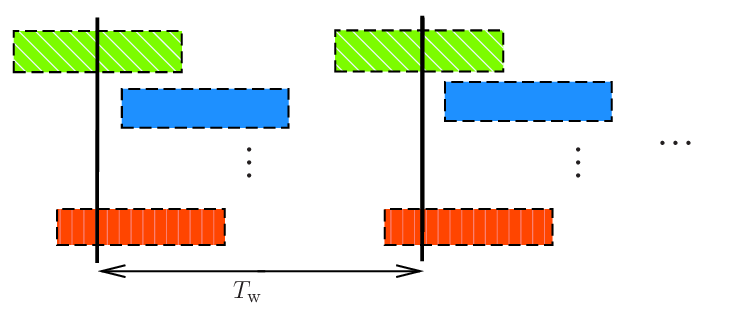}
    \caption{{The received ADS-B packet at the receiver and the observation window with length $T_{\rm w}$.}}
    \label{fig:first}
\end{subfigure}
\hfill
\begin{subfigure}{0.42\textwidth}
    \includegraphics[width=\textwidth]{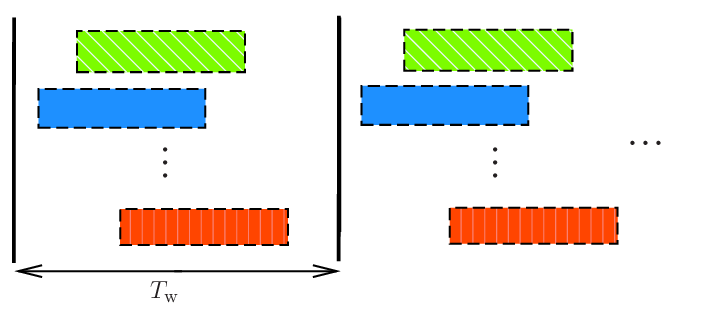}
    \caption{{A special case where the complete ADS-B packets of all the drones fall inside the observation window with length $T_{\rm w}$}.}
    \label{fig:second}
\end{subfigure}
\caption{{The reception of the ADS-B packets at the receiver. Drones periodically broadcast ADS-B packets.
Different colors are used to show the packet of drones.
}}
\label{fig:figures}
\end{figure}



It is assumed that the drones/aircrafts asynchronously broadcast their ADS-B packets every $T_{\rm P}$ seconds.
We consider an observation window of length $T_{\rm{w}}=T_{\rm P}$ for parameter estimation at the receiver (Fig.~\ref{fig:first}). To make the joint ranging and \ac{po} estimation independent of the arrival time of the ADS-B packets at the receiver, we approximate the received samples in the observation interval by an \ac{iid} complex random variable as it will be explained in Section \ref{sec_III}.
Hence,  without loss of generality,  we can consider the ADS-B packet reception in Fig.~\ref{fig:second} to simplify  modeling of the joint range and PO estimation.
In this case,
it is assumed that the ADS-B packet of the $k$th drone/aircraft with a packet length of $T_{\rm{A}}$
is received at the receiver with time delay $\tau_k \in [0,  \tau_{\rm max}]$ in the timing reference of the receiver,
where $\tau_k$ is unknown and random in each observation window of length $T_{\rm{w}}=T_{\rm P}$, and
$\tau_{\rm max}=T_{\rm P}-T_{\rm A}$ is the maximum time delay of a packet.

By employing a  baseband low pass filter with sufficient bandwidth $B$ at the receiver after RF down-conversion, the received complex baseband signal at the ADS-B receiver
 can be expressed as
\begin{align}\label{Observation}
y(t) \approx \sum_{k=1}^{K} \sqrt{P_k L_k}  x_k(t-\tau_k) e^{j (2 \pi  \Delta f_k t+\theta_k)} +w(t) ,
\end{align}
where $t \in [0 , T_{\rm w}]$, and where $P_k$, $x_k(t)$, $w(t)$, and $L_k$ denote, the transmit power by the $k$th drone,
the transmit \ac{ppm} waveform by the $k$th drone, the additive noise with \ac{psd} $N_0$ over the frequency $f \in [-B,B]$ , and the path loss between the $k$th drone/aircraft and the ADS-B receiver, respectively.
For free-space path loss, we have
\begin{align}\label{loss_range}
L_k \triangleq \Big{(}\frac{\lambda_{\rm{c}}}{4 \pi r_k}\Big{)}^2,
\end{align}
where $r_k$ is the range between the $k$th drone and the receiver,
$\lambda_{\rm{c}} \triangleq c / f_{\rm c}$ is the wavelength of the carrier wave, $c$ denotes the speed of light, and $f_{\rm c}$ represents the carrier frequency.
 For the ADS-B system, since $f_{\rm c}=1090$MHz, we have $\lambda_{\rm{c}} \approx 0.2752 $ m.
In \eqref{Observation}, $\Delta f_k$ and $\theta_k$ further denote the \ac{cfo} and the electrical \ac{po} of the $k$th drone in the observation window.
The \ac{cfo} and \ac{po} occur because the local oscillator signal for RF down-conversion at the receiver does not synchronize with the carrier signal.\footnote{The tolerable PO estimation error for coherent ADS-B detection depends on the value of SNR. The higher value of SNR, the less sensitivity to the PO estimation error in coherently detecting an ADS-B packet.}

In this paper, we consider that  $\Delta f_k T_{\rm w} \ll 1$ and $\exp(j 2 \pi  \Delta f_k t) \approx 1$ for $t \in [0,T_{\rm w}]$, $k=1,2,\cdots,K$; thus, we can write
\begin{align}\label{App_observation}
y(t) \approx \sum_{k=1}^{K} \sqrt{P_k L_k}  x_k(t-\tau_k)e^{j\theta_k}  +w(t).
\end{align}
{\bf Assumption 1:} We assume that $P_k$, $k=1,2,\ldots,K$, is known at the receiver, and that $P_1 L_1>  P_2 L_2 > \ldots, P_K L_K$.

Since the bit-rate for ADS-B systems is $1 {\rm M \ bit} / {\rm s}$,  a sampling rate of $f_s=\frac{1}{T_{\rm{s}}}=2 {\rm  M \ samples/s}$ is sufficient to capture the bit transitions, detect the preamble and decode packets. A typical sampled ADS-B signal is shown in Fig.~\ref{Fig_3} for visualization.
The discrete-time received baseband signal after sampling, i.e., $y_n \triangleq
y(nT_{\rm{s}})$, $n=0,1,\ldots,N$, where $N=239+M$,\footnote{A sampling rate of $2 {\rm  M \ samples/s}$ results in $240$ samples per ADS-B packet.}  $M  \triangleq \lfloor \frac{\tau_{\rm max}}{T_{\rm s}}\rfloor$, can be written in vector form as
\begin{align}\label{main_eq}
{\bf y} = \sum_{k=1}^{K} h_k{\bf x}_k + {\bf w} = \sum_{k=1}^{ K} {\bf z}_k + {\bf w}= {\bf g}+ {\bf w} ,
\end{align}
where ${\bf g} \triangleq \sum_{k=1}^{K} h_k{\bf x}_k=\sum_{k=1}^{K} {\bf z}_k$, ${\bf z}_k=h_k{\bf x}_k$,
\begin{subequations}
\begin{align}
{\bf y} &\triangleq    \big{[} y_0 \  y_1 \ \dots \ y_{N} \big{]}^T, \\
{\bf g} &\triangleq    \big{[} g_0 \  g_1 \ \dots \ g_{N} \big{]}^T, \\
{\bf w} &\triangleq    \big{[} w_0 \  w_1 \ \dots \ w_{N} \big{]}^T, \\
{\bf z}_k & \triangleq  \big{[} z_{k,0} \ z_{k,1} \ \cdots \ z_{k,N} \big{]}^T,
\end{align}
\end{subequations}
$w_n \triangleq
w(nT_{\rm{s}})$, $h_k \triangleq \beta_k e^{j\theta_k}$, $\beta_k \triangleq \sqrt{P_kL_k}$, and
\begin{align} \label{Vector}
{\bf x}_k & \triangleq  \big{[} x_{k,0} \ x_{k,1} \ \cdots \ x_{k,N} \big{]}^T \\
&\triangleq \nonumber
 \big{[} {\bf{0}}_{m_k}^T \ {\bf s}^T \ {\bf d}_k^T \ {\bf{0}}_{M-{m_k}}^T \big{]}^T.
\end{align}

In \eqref{Vector}, $x_{k,n} \triangleq x_k(nT_{\rm{s}}-\tau_k)$,
${\bf d}_k \in \{0,1\}$, is the data vector of the $k$th drone with a length of  $224$ symbols, ${\bf s}$ is the preamble vector given by
$
{\bf s} = \big{[} 1 \ 0  \ 1 \ 0 \  0 \ 0 \ 0 \  0 \ 0 \ 1 \ 0  \ 1 \ 0 \  0 \ 0 \ 0 \big{]}^T,
$ and
$M  \triangleq \lfloor \frac{\tau_{\rm max}}{T_{\rm s}}\rfloor$
and $m_k \triangleq \lfloor \frac{\tau_k}{T_{\rm s}}\rfloor$ denote the maximum possible integer delay for a drone/aircraft and
the integer delay of the $k$th drone/aircraft, respectively. The integer delays $m_k$, $k=1,2,\cdots,K$, are unknown at the receiver and their values change from one observation window to another.
The vector ${\bf w} \triangleq [w_0 \ w_1 \ \ldots \ w_N]^T$ in \eqref{main_eq} denotes
the additive white Gaussian noise (AWGN) with
covariance matrix
$\mathbb{E} \big{\{}{\bf w} {\bf w}^T\big{\}}=\sigma_{\rm w}^2 {\bf I}=2N_0B {\bf I}$.

Let us define hypothesis $H_m^k$ as follows:
\begin{align}\label{As}
H_m^k: \,\,\,\,\,\ {\bf x}_k = \big{[} {\bf{0}}_{m}^T,{\bf s}^T, {\bf d}_k^T , {\bf{0}}_{M-{m}}^T \big{]}^T,
\end{align}
which represents the ADS-B packet of the $k$th drone arriving at the receiver with integer delay $m_k=m\in \{0,1,\ldots,M\}$.

\begin{figure}[!t]
\centering
\includegraphics[width=3.5in]{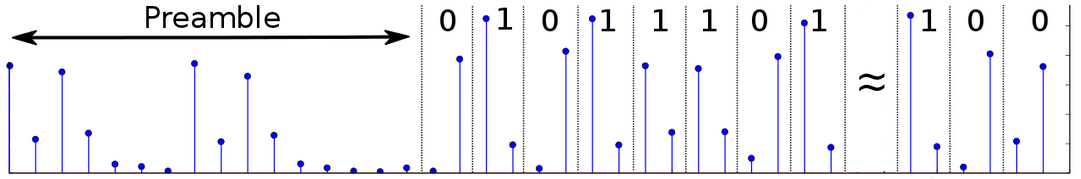}
\caption{Amplitude of the noisy ADS-B packet after sampling.}\label{Fig_3}
\end{figure}

\section{Distribution Approximation}\label{sec_III}
To remove the dependency of joint ranging and \ac{po} estimation from the unknown arrival time of the ADS-B packets at the receiver, we approximate the received noisy samples in the observation interval by \ac{iid} complex random variables.  In this paper, we show that an \ac{iid} two-dimensional (2D) \ac{gm} model can be used to model the received baseband superimposed and noise corrupted signal at the receiver.


\begin{theorem}\label{Theorem_1}
By maximizing the KLD criterion, the elements of the ADS-B packet of the $k$th drone, i.e., ${\bf x}_{k} =[
{x}_{k,0} \ {x}_{k,1} \ \ldots \ {x}_{k,N}]^T= \big{[} {\bf{0}}_{m_k}^T \ {\bf s}^T \ {\bf d}_k^T \ {\bf{0}}_{M-{m_k}}^T \big{]}^T$
can be approximated by an \ac{iid} random variable  that are Bernoulli distributed with \ac{pmf}

\begin{align}\label{Bernoulli}
q\big{(}x;p\big{)} = \begin{cases}
   p & \text{if }x=0, \\
    1-p & \text {if } x = 1,
 \end{cases}
\end{align}
where
\begin{align}\label{Bernoilli_Parameter}
p = \frac{M+124}{M+240}.
\end{align}
\end{theorem}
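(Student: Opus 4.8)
The plan is to split Theorem~\ref{Theorem_1} into two essentially independent pieces: (i) a short optimization argument that reduces the statement to matching the \emph{empirical} distribution of the entries of $\mathbf{x}_k$, and (ii) a combinatorial count of the zero-entries of $\mathbf{x}_k$. All the content sits in (ii), specifically in exploiting the PPM structure of the data block.

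For (i): any i.i.d.\ model for the binary entries $x_{k,0},\dots,x_{k,N}$ is automatically a product $\prod_n q(x_{k,n};p)$, so restricting to the Bernoulli family is no loss of generality. Let $n_0$ and $n_1=N+1-n_0$ be the numbers of zero- and one-entries of $\mathbf{x}_k$, and $\hat q=(n_0,n_1)/(N+1)$ their empirical PMF. I would interpret the KLD criterion as selecting $p$ to optimize the fit of $q(\cdot;p)$ to $\hat q$, i.e.\ to minimize $D(\hat q\,\Vert\,q(\cdot;p))$, equivalently to maximize $\tfrac{n_0}{N+1}\log p+\tfrac{n_1}{N+1}\log(1-p)$. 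Since $p\mapsto n_0\log p+n_1\log(1-p)$ is strictly concave on $(0,1)$ (both $n_0,n_1>0$), there is a unique stationary point, and setting the derivative to zero gives $p=n_0/(N+1)$. Hence the theorem is equivalent to the claim $n_0=M+124$, using $N+1=M+240$.

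For (ii): I would count zeros in $\mathbf{x}_k=[\,\mathbf 0_{m_k}^T\ \mathbf s^T\ \mathbf d_k^T\ \mathbf 0_{M-m_k}^T\,]^T$ block by block. The two padding blocks together contribute $m_k+(M-m_k)=M$ zeros, independent of the unknown integer delay $m_k$. The preamble $\mathbf s=[1\,0\,1\,0\,0\,0\,0\,0\,0\,1\,0\,1\,0\,0\,0\,0]^T$ has $4$ ones and hence $12$ zeros. The crucial observation concerns $\mathbf d_k$: its $224$ entries are the two-sample PPM chips of the $112$ ADS-B information bits, and each bit is mapped to either $[1,0]$ or $[0,1]$, so it contributes exactly one zero and one one \emph{regardless of its value}; thus $\mathbf d_k$ has $112$ zeros for every possible message. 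Summing, $n_0=M+12+112=M+124$, so by (i) $p=\tfrac{M+124}{M+240}$ (and $1-p=\tfrac{116}{M+240}$), which is the claim.

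The one genuinely non-bookkeeping step is the PPM observation: it is what makes $n_0$, and hence $p$, independent of the random payload $\mathbf d_k$, which is exactly what is needed for the clean ``$2^K$-mode Gaussian-mixture'' conclusion drawn from this theorem in Section~\ref{sec_III}. A second point I would emphasize, since it is the stated motivation of the whole approximation, is that $n_0$ is also independent of $m_k$, so the resulting i.i.d./GM model does not depend on the unknown arrival times $\tau_k$. If one wants to be fully precise, the only care needed in (i) is to fix the orientation/normalization of the KLD criterion; the stationarity condition $p=n_0/(N+1)$ is the same for $D(\hat q\,\Vert\,q(\cdot;p))$, for $D(q(\cdot;p)\,\Vert\,\hat q)$, and for the corresponding normalized log-likelihood, so no ambiguity survives.
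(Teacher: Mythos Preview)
Your proof is correct and follows essentially the paper's approach: the same PPM/Manchester count yielding $n_0=M+124$ zeros and $n_1=116$ ones, followed by the same optimization $\max_p\, p^{M+124}(1-p)^{116}$ and the same derivative computation. The only cosmetic difference is that the paper sets up the KLD between the true \emph{joint} PMF $f$ on $\{0,1\}^{N+1}$ (uniform over the $2^{112}$ admissible packets under $H_m^k$) and the i.i.d.\ product $\prod_n g(x_{k,n})$, rather than between your empirical marginal $\hat q$ and $q(\cdot;p)$; because every admissible packet has the identical count $(n_0,n_1)=(M+124,116)$, the joint KLD collapses to exactly your scalar problem, so the two framings coincide.
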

\begin{proof}
See Appendix \ref{Appendix_1}.
\end{proof}
As seen in Theorem \ref{Theorem_1}, the approximated \ac{pmf} $q$ does not depend on $H_m^{k}$.

Since ${\bf z}_{k} = h_{k} {\bf x}_{k}$
is the scaled version of ${\bf x}_{k}$ and the parameter of the Bernoulli distribution, $p$, is independent of $m_k$, the elements of the complex vector ${\bf z}_{k}$ can be approximated by \ac{iid} complex random variables $Z_k$ with \ac{pdf} $f_{{Z}_{k}}\big{(}z;p,h_k\big{)}$ as follows
\begin{align}\label{Theory_1}
f_{{Z}_{k}}\big{(}z;p,h_k\big{)}
=p\delta_{\rm c}(z)+(1-p)\delta_{\rm c}(z-h_k),
\end{align}
where $\delta_{\rm c}(z)$, $z=z_{\rm r}+jz_{\rm I} \in {\mathbb C}$, is the complex Delta function and is defined as
\begin{align}
\delta_{\rm c}(z) \triangleq \delta(z_{\rm r})\delta(z_{\rm I}),
\end{align}
where $\delta(t)$, $t \in {\mathbb R}$,  is the Dirac Delta function.

We consider that
$g_i \triangleq \sum_{k=1}^{K}z_{k,i}   \sim G$ and $z_{k,i} \sim Z_k$ given in \eqref{Theory_1}, where the symbol $\sim$ denotes distributed according to.
The \ac{pdf} of the sum of independent random variables is obtained as the convolution of the \ac{pdf}s.
For the complex random variable, $G=\sum_{k=1}^{K} Z_k$, by employing the multi-binomial theorem \cite{morris1975central}, we can obtain the  \ac{pdf} of
$G$ as
\begin{align} \nonumber
 f_{G}&(g;p,{\bf{h}})  =
\sum_{v_1=0}^{1} \cdots \sum_{v_K=0}^{1}
\bigg{[} p^{\sum_{k=1}^{K} v_k} (1-p)^{K-{\sum_{k=1}^{K} v_k}} \\ \label{delta}
& \,\,\,\,\,\,\,\,\,\,\,\,\,\,\,\,\,\,\,\,\,\,\,\,\,\,\,\,\,\,\,\ \times  \delta_{\rm c} \Big{(}g-\sum_{k=1}^{K}(1-v_{k})h_{k}\Big{)} \bigg{]},
\end{align}
where ${\bf{h}}\triangleq [h_1,h_2,\cdots,h_K]^T$.

\begin{proof}
See Appendix \ref{Appx_B}.
\end{proof}

For the circularly symmetric complex Gaussian noise vector, ${\bf w} \triangleq    \big{[} w_0 \  w_1 \ \dots \ w_{N} \big{]}^T$,
the \ac{pdf} of the random variable $W$ associated with the noise elements is expressed as
\begin{align}\label{noise_pdf}
f_{{ W}} ({w};\sigma_{\rm w}^2) \triangleq \frac{1}{{\pi \sigma_{\rm{w}}^2}} \exp{\bigg{(}\frac{-|w|^2}{\sigma_{\rm{w}}^2}\bigg{)}},
\end{align}
where $w\in \mathbb{C}$.
From \eqref{main_eq}, we have $y_n = g_n+w_n$, $n=0,1,\dots,N$, where $g_n \sim G$ and $w_n \sim W$.
Since $G$ and $W$ are independent complex random variables, the \ac{pdf} of $Y=G+W$ is obtained by the linear convolution of  the \ac{pdf}s in
\eqref{delta} and \eqref{noise_pdf}, which results in
\begin{align}\label{pdf_mixture}
&f_{Y}({{y}};p,{\bm \beta},{\bm \theta},\sigma_{\rm{w}}^2) = \sum_{a=0}^{2^K-1} \frac{\xi_a }{{\pi \sigma_{\rm{w}}^2}}  \mathcal{C}\mathcal{N}(y;\mu_a,\sigma_{\rm{w}}^2)
\\ \nonumber
& =\sum_{a=0}^{2^K-1} \frac{\xi_a }{{\pi \sigma_{\rm{w}}^2}}  \exp{\Bigg{(}-\frac{|y-\mu_a|^2}{\sigma_{\rm{w}}^2}\Bigg{)}},
\end{align}
where
\begin{subequations}
\begin{align}
{\bm \beta} & \triangleq \Big{[}\beta_1 \ \beta_2\ \ldots \ \beta_{K} \Big{]}^T,
\\
{\bm \theta} & \triangleq \big{[}\theta_1 \ \theta_2 \ \ldots \ \theta_K \big{]}^T,
\end{align}
\end{subequations}
\begin{align}\label{pi}
\xi_a &\triangleq p^{\sum_{k=1}^{K} b_k} (1-p)^{K-{\sum_{k=1}^{K} b_k}},
\end{align}
and
\begin{align}\label{tyu}
{{\mu}}_a  \triangleq  \sum_{k=1}^{K}  (1-b_k)h_{k}
=\sum_{k=1}^{K}  (1-b_k)\beta_{k}\exp(j\theta_{k}).
\end{align}
with $b_i$ the $i$th bit in the binary representation of $a$ as
\begin{align}
a&=(b_K,b_{K-1},\ldots,b_1)_2, \,\,\,\,\,\,\,\,\,\,\,\,\,\ b_i \in \{0,1\},
\end{align}
and $a= 0 ,1 , \dots ,2^K-1$.

As seen, $f_{Y}({{y}};p,{\bm \beta},{\bm \theta},\sigma_{\rm{w}}^2)$ represents a 2D \ac{gm}, where its modes are located at the delta functions given in \eqref{delta}. Since $\theta_1,\theta_2,\ldots,\theta_K$, are independent uniform random variables in the range of $[0,2\pi)$, and  $\beta_1>\beta_2>\ldots > \beta_K$, with probability of almost one, the number of distinct mixtures is $2^K$.
Fig.~\ref{Fig_x} illustrates the constellation of the received signal for $K=3$ drones, maximum integer delay $M=20$, and transmit power of $51$ dBm.

\begin{figure}[!t]
\centering
\includegraphics[width=3.2in]{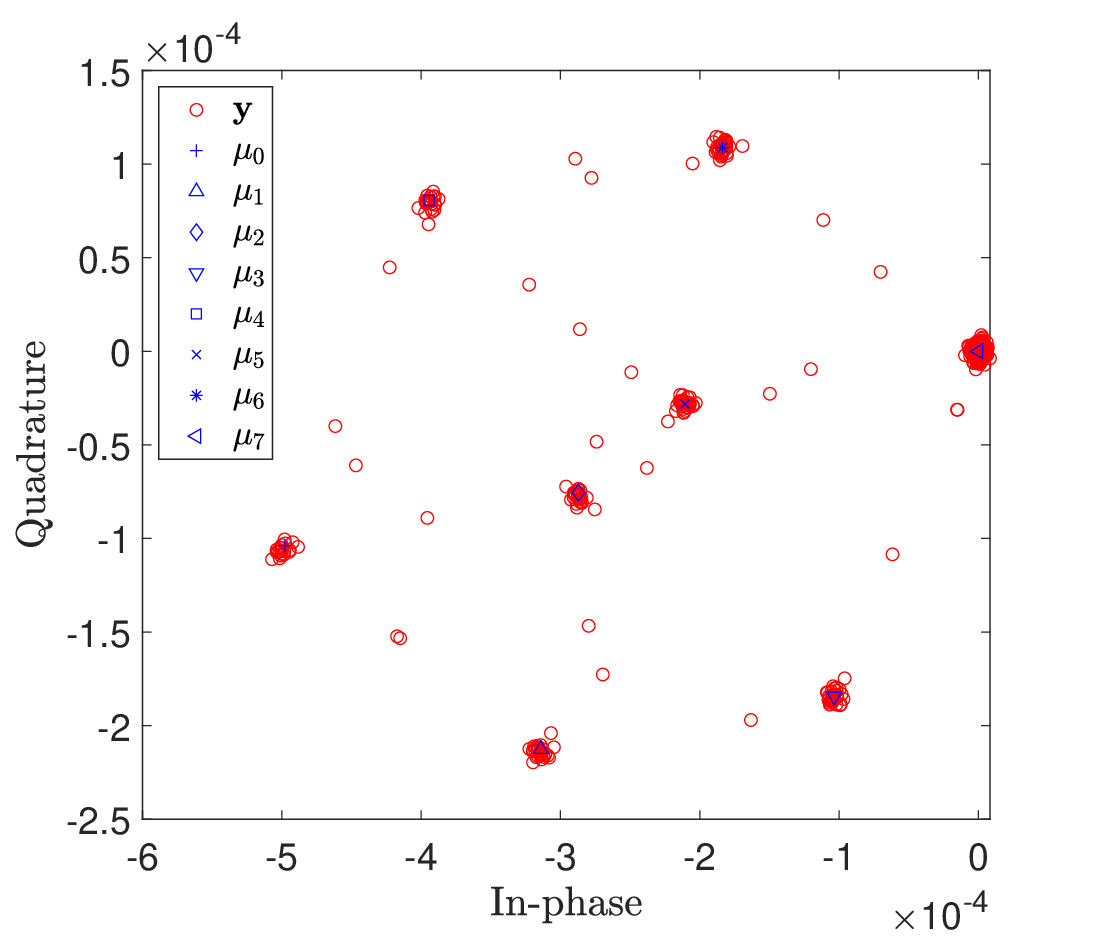}
\caption{{The in-phase and quadrature components of the received signal for $K=3$ drones at $r_1=r_2=r_3=12$ Km with transmit power of $51$ dBm.}}\label{Fig_x}
\end{figure}

\section{Joint Range and \ac{po} Estimation}\label{section_IV}
The \ac{mle} for the  vector parameters $[{\bm{\beta}}^T \ \bm{\theta}^T]^T$ given observation vector ${\bf{y}} =[{y}_0 \ {y}_2 \ \ldots \ {y}_{N}]^T$ is expressed as
\begin{align}\label{mle}
\{\hat{\bm{\beta}},\hat{\bm{\theta}}\}= \operatorname*{arg\,max}_{{\bm \beta},{\bm{\theta}}} \sum_{n=0}^{N} \ln f_{{Y}}({ y}_n;p,{\bm \beta },\bm{\theta}, \sigma_{\rm{w}}^2).
\end{align}
The maximization problem in \eqref{mle} cannot be analytically solved in a trackable manner. An alternative simple solution is to employ the \ac{em} algorithm to estimate the $2^K$ modes of the
\ac{gm} components; then, we can decouple the desired parameters, i.e., ${\bm \beta}$ and ${\bm \theta}$ from the estimated modes.

Let $\bm{\mu} \triangleq [\mu_0 \ \mu_1 \ \ldots \ \mu_{2^K-1}]^T$ denote the mode vector of the $\ac{gm}$, where $\mu_a$, $a=0,1,\ldots,2^K-1$, is given by
\eqref{tyu}. {The elements of the vector $\bm{\mu}$ for $K=2,3,4$ are given in Appendix \ref{Appendix 3}.}

Let us define the discrete function $\chi_q(n)$ as
\begin{align}\label{fun}
\chi_q(n): \{1,2,\ldots,l\}\longrightarrow \{1,2,\ldots,l\},
\end{align}
where for $n_1 \neq n_2$, $\chi_q(n_1) \neq \chi_q(n_2)$. There are $Q_l \triangleq  l!$ unique functions in the form of \eqref{fun}, where $!$ denotes the factorial function.
Using \eqref{fun}, $Q_l$ permutation matrices of size $l \times l$ can be defined as
\begin{align}\label{permu}
{{\bf \Lambda}_q} = \left[ \begin{array}{l}
{{\bf e}_{{\chi _q}(1)}}\\
{{\bf e}_{{\chi _q}(2)}}\\
\,\,\,\, \vdots \\
{{\bf e}_{{\chi _q}({l})}}
\end{array} \right],
\end{align}
where ${\bf{e}}_{\ell}$, $\ell=1,2,\ldots,l$,  denote the standard basis vectors of length $l$ with a $1$ in the $\ell$th coordinate and 0's elsewhere.
The set composed of all permutations of vector ${\bf a} =[a_1 \ a_2 \ \ldots \ a_l]^T$ is given by
\begin{align}
\mathcal{F}_{\bf a}^{Q_l} \triangleq \Big{\{} {{{\bf{\Lambda }}_1}{\bf{a }},{{\bf{\Lambda }}_2}{\bf{a }}, \ldots ,{{\bf{\Lambda }}_{Q_l}}{\bf{a }}} \Big{\}},
\end{align}
where $Q_l =  l!$.

The \ac{em} algorithm estimates the permuted mode vector ${\bm \eta} =[\eta_0 \ \eta_1 \ \ldots  \eta_{2^K-1}]^T \in \mathcal{F}_{\bm \mu}^{Q_{2^K}} \subset  \mathbb{C}^{2^K}$,
where $Q_{2^K}=2^K!$ and
${\bm{\mu}} \triangleq [{{\mu}}_0 \ {{\mu}}_1 \ \ldots \ {{\mu}}_{2^K-1}]^T$.
The \ac{em} algorithm
defines a latent random vector ${\bf u}\triangleq [u_0\ u_1 \ \ldots \ u_{N}]^T$ that determines the \ac{gm} component from which the observation originates, i.e., $f_{{Y}|U}({{y}}_n|u_n=a;p,{\bm \beta},{\bm \theta},\sigma_{\rm{w}}^2)
\sim \mathcal{C}\mathcal{N}(y_n;\mu_a,\sigma_{\rm{w}}^2)$,
where $P_U(u_n=a)=\xi_a$ for $n=0,1,\ldots,N$ and $a=0,1,\ldots,2^K-1$.
The \ac{em} algorithm iteratively maximizes the expected value of the complete-data log-likelihood function to estimate the permuted mode vector
${\bm \eta} =[\eta_0 \ \eta_1 \ \ldots  \eta_{2^K-1}]^T \in \mathcal{F}_{\bm \mu}^{Q_{2^K}} \subset  \mathbb{C}^{2^K}$
of the \ac{gm} as follows \cite{mclachlan2019finite,theodoridis2015machine,moon1996expectation}
\begin{align}\label{maximize}
\hat{\bm \eta}^{(t+1)} = \operatorname*{arg\,max}_{\bm \eta} Q({\bm \eta}|{\bm \eta}^{(t)}),
\end{align}
where ${\bm \eta}^{(0)}$ is the initialization vector,
\begin{align}\label{EM_main}
&Q({\bm \eta}|{\bm \eta}^{(t)}) = \mathbb{E}_{{\bm U}|{\bm Y},{\bm{\eta}}^{(t)}}\Big{\{}\ln f_{{\bm Y},{\bm U}}({\bf y},{\bf u};p,{\bm \eta},\sigma_{\rm{w}}^2)\Big{\}} \\ \nonumber
& = \sum_{n=0}^{N} \sum_{a=0}^{2^K-1} \lambda_{a,n}^{(t)}
 \Bigg{(} \ln \frac{\xi_a}{\pi \sigma_{\rm w}^2}  - \frac{|y_n-\eta_a|^2}{\sigma_{\rm{w}}^2}  \Bigg{)},
\end{align}
with
\begin{align}
\lambda_{a,n}^{(t)} &= P_{U|Y}\big{(}u_n=a|{y}_n;{\bm{\eta}}^{(t)}\big{)} \\ \nonumber
&=\frac{P_U(u_n=a) f_{Y|U}\big{(}{y}_n|u_n=a;{\bm{\eta}}^{(t)}\big{)} }{f_Y(y_n)} \\ \nonumber
&=\frac{\xi_a \mathcal{C}\mathcal{N}\big{(} y_n;\eta_a^{(t)},\sigma_{\rm{w}}^2)}{\sum_{q=0}^{2^K-1}\xi_q \mathcal{C}\mathcal{N}( y_n;\eta_q^{(t)},\sigma_{\rm{w}}^2)},
\end{align}
and the complete-data likelihood function is given by
\begin{align}\label{likelihood_1}
f_{{\bm Y, \bm U}}({\bf y},{\bf u};p,\bm{\eta},\sigma_{\rm{w}}^2)=
\prod_{n=0}^{N} \prod_{a=0}^{2^K-1}
\big{(} \xi_a \mathcal{C}\mathcal{N}(y_n;\eta_a,\sigma_{\rm{w}}^2)\big{)}^{\mathbb{I}\{u_n=a\}}.
\end{align}
In \eqref{likelihood_1}, $\mathbb{I}\{\cdot\}$ denotes the indicator function, and  $\xi_a$ is a function of $p$ and is given in \eqref{pi}.
The \ac{em} algorithm at the $(t+1)$th iteration estimates the vector ${\bm \eta}^{(t+1)} =[\eta_0^{(t+1)} \ \eta_1^{(t+1)} \ \ldots \eta_{2^K-1}^{(t+1)}]^T$ which is
a permuted version of the vector $\bm{\mu}$. The order of ${\bm \eta}^{(t+1)}$ depends on the initialization of the EM algorithm, i.e., ${\bm \eta}^{(0)}$. 
By solving the maximization problem in \eqref{maximize},
the elements of ${\bm \eta}^{(t+1)}$ are iteratively updated as follows
\begin{align}\label{mode_update_1}
\eta_a^{(t+1)}=\frac{\sum_{n=0}^{N}  \lambda_{a,n}^{(t)}y_n}{\sum_{n=0}^{N}  \lambda_{a,n}^{(t)}},
\end{align}
for $a=0,1,\dots, 2^K-1$, where
the convergence condition for the \ac{em} algorithm is $\|{\bm \eta}^{(t+1)}-{\bm \eta}^{(t)}\|<\epsilon$, with $\epsilon$ a preset threshold. We denote $\hat{\bm \eta}={\bm \eta}^{(t+1)}$ when the \ac{em} algorithm converges at the $(t+1)$th iteration.

The \ac{em} algorithm may converge to a local maximum of the observed data likelihood function, depending on the starting value. A variety of heuristic or metaheuristic approaches exist to escape a local maximum, such as random-restart hill climbing  where the \ac{em} algorithm starts with different random initial estimates. The recursive expression in \eqref{mode_update_1} is very straightforward and can be easily computed for multiple initial points. Hence, \ac{em} with random-restart hill climbing can be employed for the problem at hand.
As mentioned earlier, there is an order ambiguity in the vector $\hat{\bm \eta}$; thus, reordering estimation is required to resolve this ambiguity. In the next section, we propose different solutions based on  permutation-
based combinatorial optimization for one to one mapping between the elements of $\hat{\bm \eta}$ and ${\bm \mu}$.

\section{Reordering Estimation}\label{Estimation_Map}
The goal of reordering estimation is to change the order of the elements in
$\hat{\bm{\eta}} \triangleq [\hat{{\eta}}_0 \ \hat{{\eta}}_1 \ \ldots \ \hat{{\eta}}_{2^K-1}]^T$, obtained by the \ac{em} algorithm, to achieve a new vector $\hat{\bm{\mu}} \triangleq [\hat{{\mu}}_0 \ \hat{{\mu}}_1 \ \ldots \ \hat{{\mu}}_{2^K-1}]^T$ that corresponds to an estimate of ${\bm{\mu}} \triangleq [{{\mu}}_0 \ {{\mu}}_1 \ \ldots \ {{\mu}}_{2^K-1}]^T$.
Let us denote $\hat{{\eta}}_i$ as the element of $\hat{\bm{\eta}}$ that corresponds to ${\mu}_{2^K-1}=0$. The index $i$ can be estimated as
\begin{align}\label{first_elm}
|\hat{\eta}_i|<  \min \Big{\{}|\hat{\eta}_0|,|\hat{\eta}_1|,\ldots,|\hat{\eta}_{i-1}|,|\hat{\eta}_{i+1}|,\ldots,|\hat{\eta}_{2^K-1}|\Big{\}}.
\end{align}
Accordingly, we have $\hat{{\mu}}_{2^K-1} = \hat{\eta}_i$.
%
Let us now define
\begin{align}\label{set_omit}
{\hat{\bm{\eta}}}_i \triangleq [\hat{{\eta}}_0 \ \hat{{\eta}}_1 \ \ldots \ \hat{\eta}_{i-1} \  {\hat \eta}_{i+1} \ \ldots \ \hat{{\eta}}_{2^K-1}]^T,
\end{align}
and
\begin{align}\nonumber
{\hat{\mathcal{A}}}_{l} & \triangleq \Big{\{}[\phi_1 \ \phi_2 \ \ldots \ \phi_l]^T\big{|} \forall {d} \in \{1,2,\ldots, l \},
\\  \label{set}
& \,\,\,\,\,\,\,\,\,\,\ \phi_d \in \{\hat{{\eta}}_0 , \hat{{\eta}}_1 , \ldots \ \hat{\eta}_{i-1} ,  {\hat \eta}_{i+1} \ \ldots \ \hat{{\eta}}_{2^K-1}\}, \\ \nonumber
& \,\,\,\,\,\,\,\,\,\,\ |\phi_1|> |\phi_2| > \ldots |\phi_l|
\Big{\}}.
\end{align}

 Different reordering estimation methods for $\hat{\bm \beta}$ and $\hat{\bm \theta}$ can be considered.
The \ac{ls} reordering estimation for $\hat{\bm \beta}$ and $\hat{\bm \theta}$  form $\hat{\bm{\eta}}_i$
is given by
\begin{align}\label{estimate}
\hat{\bf h} = \hat{\bm \beta}e^{j\hat{\bm \theta}}= \hat{\bf \Lambda}{\bf A}\hat{\bf \Phi},
\end{align}
where
\begin{align}\label{eq:ls}
\{\hat{\bf \Lambda},\hat{{\bf \Phi}}\} &= \operatorname*{arg\,min}_{{\bf \Lambda},{{\bf \Phi}}} \big{\|}{\bf \Lambda}{\bf A}{\bf \Phi}- \hat{\bm{\eta}}_i\big{\|}_2, \\ \nonumber
&\textrm{s.t.} \,\ {\bf \Phi} \triangleq [\Phi_1,\Phi_2,\ldots,\Phi_K]^T \in  {\hat{\mathcal{A}}}_{K}\\ \nonumber
&\,\,\,\,\,\,\,\ {\bf \Lambda} \in \{{\bf \Lambda}_1, {\bf \Lambda}_2, \ldots , {\bf \Lambda}_{(2^{K-1})!}\}
\end{align}
where ${\bf \Lambda}_i$ is a permutation matrix of size $2^{K-1} \times 2^{K-1}$ given in \eqref{permu} for $l=2^{K-1}$, and ${\bf A}$ is the $2^{K-1}\times K$ matrix given by
\begin{align}
{\bf A} \triangleq
\left[ \begin{array}{l}
\,\,\,\,\,\,\,\,\,\,\,\,\,\,\,\,\,\,\,\,\ {{\bf{e}}_1}\\
\,\,\,\,\,\,\,\,\,\,\,\,\,\,\,\,\,\,\,\,\ {{\bf{e}}_2}\\
\,\,\,\,\,\,\,\,\,\,\,\,\,\,\,\,\,\,\,\,\,\ \vdots \\
\,\,\,\,\,\,\,\,\,\,\,\,\,\,\,\,\,\,\,\,\ {{\bf{e}}_K}\\
\,\,\,\,\,\,\,\,\,\,\,\,\,\,\,\,\,{{\bf{e}}_1} + {{\bf{e}}_2}\\
\,\,\,\,\,\,\,\,\,\,\,\,\,\,\,\,\,{{\bf{e}}_1} + {{\bf{e}}_3}\\
\,\,\,\,\,\,\,\,\,\,\,\,\,\,\,\,\,\,\,\,\,\, \vdots \\
{{\bf{e}}_1} + {{\bf{e}}_2} +  \ldots  + {{\bf{e}}_K}
\end{array} \right],
\end{align}
where ${\bf{e}}_j$, $j=1,2,\ldots,K$,  denote the standard basis vectors of length $K$.

An alternative minimization problem with lower computational complexity for vector reordering is given by

\begin{align}\label{eq:ls2}
\{\hat{\bf \Lambda},\hat{{\bf \Phi}}\} &= \operatorname*{arg\,min}_{{\bf \Lambda},{{\bf \Phi}}} \big{\|}{\bf \Lambda}{\bf A}{\bf \Phi}- \hat{\bm{\eta}}_i\big{\|}_2, \\ \nonumber
&\textrm{s.t.} \,\ {\bf \Phi} \triangleq [\Phi_1,\Phi_2,\ldots,\Phi_K]^T \in  \mathbb{C}^K \\ \nonumber
&\,\,\,\,\,\,\,\ {\bf \Lambda} \in \{{\bf \Lambda}_1, {\bf \Lambda}_2, \ldots , {\bf \Lambda}_{(2^{K-1})!}\}
\end{align}
By employing the solution of the unconstrained \ac{ls} minimization for a linear observation model \cite{kay1993fundamentals} and the fact that ${\bf \Lambda}^T{\bf \Lambda}={\bf \Lambda}{\bf \Lambda}^T={\bf I}_{2^K-1}$, we can easily write
\begin{align}
{{\bf \Phi}} = ({\bf A}^T{\bf A})^{-1}{\bf A}^T{\bf \Lambda}^T\hat{\bm{\eta}}_i,
\end{align}
and thus we can formulate the following problem
\begin{align}\label{eq:ls3}
\hat{\bf \Lambda} &= \operatorname*{arg\,min}_{{\bf \Lambda}} \hat{\bm{\eta}}_i^T (\hat{\bm{\eta}}_i-{\bf \Lambda}{\bf A}{{\bf \Phi}}) \\ \nonumber
&\textrm{s.t.} \,\ {{\bf \Phi}} = ({\bf A}^T{\bf A})^{-1}{\bf A}^T{\bf \Lambda}^T\hat{\bm{\eta}}_i \\ \nonumber
& \,\,\,\,\,\,\,\ |\Phi_1|> |\Phi_2|>\ldots > |\Phi_K| \\ \nonumber
&\,\,\,\,\,\,\,\ {\bf \Lambda} \in \{{\bf \Lambda}_1, {\bf \Lambda}_2, \ldots , {\bf \Lambda}_{(2^{K-1})!}\}.
\end{align}

Our simulation experiments show that for low and moderate \ac{snrs}, the vector reordering based on the combinatorial optimization in \eqref{eq:ls2} outperforms the one in \eqref{eq:ls} in terms of estimation error. While the vector reordering based on the minimization formulation in \eqref{eq:ls2} offers a lower computational complexity compared to the minimization in \eqref{eq:ls}, the computational complexity of both methods is still high for $K>4$.


An alternative low complexity solution is to define a combinatorial optimization with lower cardinality.
By selecting all $Q$-combination of the set $\{\hat{{\eta}}_0 \ \hat{{\eta}}_1 \ \ldots \ \hat{\eta}_{i-1} \ {\hat \eta}_{i+1} \ \ldots \ \hat{{\eta}}_{2^K-1}\}$, we can define a $C_Q^{2^K-1}$-cardinality combinatorial optimization problem that minimizes a linear/non-linear  combinations of the elements in the set in such a way that one or more elements of the set can be
unambiguously assigned to the elements of $\bm{\mu}$.

As an example, for $K=4$ drones/aircrafts, our goal is to estimate
${\mu}_{14}=\beta_1\exp(j\theta_1)$, ${\mu}_{13}=\beta_2\exp(j\theta_2)$,
${\mu}_{11}=\beta_3\exp(j\theta_3)$, and ${\mu}_{7}=\beta_4\exp(j\theta_4)$, where $\beta_1> \beta_2> \beta_3> \beta_4$ (refer to \eqref{4mu}).
We can easily show that there is a unique solution for the linear equation
\begin{align}
z_0+z_1+z_2+z_3-z_4=0,
\end{align} where
$z_i\in \{\mu_0 \ \mu_1 \ \ldots \ \mu_{14}\}$ and $|z_1|>|z_2|> |z_3|> |z_4|$. This unique solution is
${\mu}_{14}+{\mu}_{13}+{\mu}_{11}+{\mu}_{7}-{\mu}_{0}=0$, where
${\mu}_{0} =\sum_{i=1}^{4}\beta_i\exp(j\theta_i)$, ${\mu}_{14}=\beta_1\exp(j\theta_1)$, ${\mu}_{13}=\beta_2\exp(j\theta_2)$,
${\mu}_{11}=\beta_3\exp(j\theta_3)$, and ${\mu}_{7}=\beta_4\exp(j\theta_4)$ for $\beta_1> \beta_2> \beta_3> \beta_4$.
By taking this into account, we can define the following combinatorial optimization for reordering estimation
\begin{align}\label{eq:25s}
\hat{\bm \Phi} &= \operatorname*{arg\,min}_{\bm \Phi} \ \big{|}{{\bf v}^T}{\bm \Phi} \big{|}\\ \nonumber
&\textrm{s.t.} \,\ {\bm \Phi} \triangleq [\phi_1 \ \phi_2 \ \ldots \ \phi_5]^T   \in \hat{\mathcal A}_{5} \nonumber
\end{align}
where $\hat{\bm \Phi} \triangleq [\hat{\phi}_1 \ \hat{\phi_2} \ \ldots \ \hat{\phi_5}]^T$,
the set $\hat{\mathcal A}_{5}$ is defined in \eqref{set} for $l=5$, and ${\bf v} \triangleq [v_1 \ v_2 \ v_3 \ v_4 \ v_5]^T= [1 \ 1 \ 1 \ 1 \ -1]^T$.
It is obvious that \eqref{eq:25s} represents a $C_5^{15}=3003$-cardinality combinatorial optimization problem.
By solving the  combinatorial optimization in \eqref{eq:25s}, we obtain
\begin{align}
\hat{\bf h} = \hat{\bm \beta}e^{j\hat{\bm \theta}}=[\hat{\phi}_1 \ \hat{\phi_2} \ \hat{\phi}_3 \ \hat{\phi_4}]^T.
\end{align}

Different combinatorial optimization problems can be defined for reordering estimation.
For $K=4$, let us consider the following linear equation
\begin{align}\label{Comb_4}
7 \sum_{i=0}^{3} z_i -\sum_{i=4}^{14}  z_i=0,
\end{align}
 where
$z_i\in \{\mu_0 \ \mu_1 \ \ldots \ \mu_{14}\}$ and $|z_1|>|z_2|> |z_3|> |z_4|$.
Since $\beta_1>\beta_2>\beta_3>\beta_4$ (refer to assumption 1), we can show that the solution of \eqref{Comb_4} is given by $7\mu_{14}+7\mu_{13}+7\mu_{11}+7\mu_{7}-{\bf c}^T{\bf \Lambda}^T =0$, where
${\bf c}\triangleq[\mu_{0}\ \mu_1 \ \mu_2 \ \mu_3 \ \mu_4 \ \mu_5 \ \mu_6  \ \mu_8 \ \mu_9 \ \mu_{10} \ \mu_{12}]^T$ and
 ${\bf \Lambda}$ is a $11 \times 11$ permutation matrix.\footnote{The second summation is independent of the permutation of $\bf c$.} Taking this equation into account, the combinatorial optimization for reordering estimation can be expressed as
 \begin{align}\label{eq:ks}
\hat{\bm \Phi} &= \operatorname*{arg\,min}_{{\bm{\Phi}}} \ \Big{|}7{\bm \Phi}^T  {\bf 1}_4 -\alpha\Big{|}, \\ \nonumber
&\textrm{s.t.} \,\ {\bm \Phi} \triangleq [\phi_1 \ \phi_2 \ \phi_3 \ \phi_4]^T   \in \hat{\mathcal A}_{4} \\ \nonumber
&\,\,\,\,\,\,\,\ \alpha=\sum_{\kappa\in {\mathcal S}  }\kappa
\end{align}
%
where ${\mathcal S} \triangleq \{\hat{{\eta}}_0 , \hat{{\eta}}_1 , \ldots \ \hat{\eta}_{i-1} ,  {\hat \eta}_{i+1} \ \ldots \ \hat{{\eta}}_{2^K-1}\}-\{\phi_1,\phi_2,\phi_3,\phi_4\}$, $\hat{\bm \Phi}  \triangleq [\hat{\phi}_1 \ \hat{\phi}_2 \ \hat{\phi}_3 \ \hat{\phi}_4]^T$, ${\bf 1}_4 =  [1 \ 1 \ 1 \ 1]^T$, and
the set $\hat{\mathcal A}_{4}$ is defined in \eqref{set} for $l=4$. By using the solution of the minimization in \eqref{eq:ks},  we can write $\hat{\bf h} = \hat{\bm \beta}e^{j\hat{\bm \theta}}=[\hat{\phi}_1 \ \hat{\phi_2} \ \hat{\phi}_3 \ \hat{\phi_4}]^T$.


It should be mentioned that other combinatorial optimizations can also be defined for reordering estimation.
Ambiguity removal through multiple combinatorial optimization problems can also be considered.

\subsection{Joint Range and \ac{po} Estimation}
For joint range and \ac{po} estimation of $K$ drones, the modes $\mu_{a_k}\triangleq \sqrt{P_{k}L_{k}}e^{j\theta_{k}}$, $k=1,2,\ldots,K$, are needed to be estimated, where
\begin{align}\label{ak}
a_k = \sum_{\substack{n=0 \\ n \neq k-1}}^{K-1} 2^n.
\end{align}
Let $\hat{\mu}_{a_k}$, $k=1,2,\ldots,K$, denote the estimated modes after reordering estimation.
By using \eqref{loss_range}, the range and \ac{po} for the $k$th drone are estimated as
\begin{align}\label{range_es}
{\hat{r}}_{k}= \frac{\lambda_{\rm c}\sqrt{P_{k}}}{\big{|}4\pi\hat{\mu}_{a_{k}}\big{|}},
\end{align}
and

\begin{align}\label{phase_es}
\hat{\theta}_{k} =
\begin{cases}
\tan^{-1}\frac{\Im \{ \hat{\mu}_{a_k} \}}{\Re \{\hat{\mu}_{a_k}\}},\,\,\,\,\,\,\,\,\,\,\,\,\,\,\,\,\,\,\ {\Re \{\hat{\mu}_{a_k}\}}\geq 0 \\
\tan^{-1}\frac{\Im \{ \hat{\mu}_{a_k} \}}{\Re \{\hat{\mu}_{a_k}\}}+\pi, \,\,\,\,\,\,\,\,\ {\Re \{\hat{\mu}_{a_k}\}}< 0
\end{cases}.
\end{align}
where $\Im\{\cdot\}$ and $\Re\{\cdot\}$ denotes the real and imaginary operators, respectively.
The proposed \ac{em}-based joint ranging and \ac{po} estimation is
summarized in Algorithm \ref{euclid}, where $T_{\rm EM}$ denotes the number of iterations of the \ac{em} algorithm.\footnote{The pseudocode has been written for the reordering methods in \eqref{eq:ls} and \eqref{eq:ls2}. Similar pseudocode can be written for other reordering estimation methods.}

  \begin{algorithm}[!t]
    \caption{: \ac{em}-based joint ranging and \ac{po} estimation}\label{euclid}
    \begin{algorithmic}[1]
    \Statex \textbf{Input:} $P_1,P_2,\ldots,P_K$ and $\bf y$
    \Statex \textbf{Output:} $\hat{r}_k$ and $\hat{\theta}_k$ for $k=1,2,\ldots,K$
     \Statex \text{Initialize ${\bm{\eta}}^{(0)}$ }
 \For {$t=0,1,\ldots,T_{\rm EM}-1$}
\State  \text{Use \eqref{mode_update_1} to obtain ${\bm{\eta}}^{(t+1)} \triangleq [{{\eta}}_0^{(t+1)} \ {{\eta}}_1^{(t+1)} \ \ldots \ {{\eta}}_{2^K-1}^{(t+1)}]^T$}
\State {\bf if}   $\|{\bm \eta}^{(t+1)}-{\bm \eta}^{(t)}\|<\epsilon$ or $t=T_{\rm EM}-1$
\State $\hat{\bm \eta} = {\bm \eta}^{(t+1)}$ and $t=T_{\rm EM}$
\State {\bf end if}
\EndFor
\State Use $\eqref{first_elm}$ to estimate the element of $\hat{\bm{\eta}}$ that corresponds to ${\mu}_{2^K-1}=0$ and obtain ${\hat{\bm{\eta}}}_i$ in \eqref{set_omit}.
\State Solve the combinatorial optimization in \eqref{eq:ls} or \eqref{eq:ls2} to obtain $\{\hat{\bf \Lambda},\hat{{\bf \Phi}}\}$
\State Estimate  $\hat{\bf h}=[\hat{\mu}_{a_1} \ \hat{\mu}_{a_2} \ldots \ \hat{\mu}_{a_K}]$ by using \eqref{estimate} for $a_k$, $k=1,2,\ldots,K$, given in \eqref{ak}.
\State Estimate  $r_k$ and $\theta_k$ by employing \eqref{range_es} and \eqref{phase_es}.
    \end{algorithmic}\label{2mcmult}
  \end{algorithm}

\section{Multiple Antennas Receiver}\label{mimo}
In this section, we extend the derived maximum likelihood cost function and the proposed \ac{em}-based joint range and \ac{po} estimators to the case of multiple receive antennas.

\subsection{Maximum Likelihood and \ac{em} Estimators}
{We consider $N_{\rm r}$ single antenna receivers where we assume that the distance between the receive antenna elements is more than half a wavelength. Under this assumption, the coupling effect can be neglected and spatially uncorrelated Gaussian noise can be considered.
We propose a time-domain estimator and does not consider the directivity of the multiple receive antennas in contrast to an antenna array because of the random phase of each receiver. It should be mentioned that an antenna array  is a set of multiple connected antennas which work together as a single antenna to transmit or receive radio waves. However, we consider independent single antenna receivers that take the advantage of combining gain.
}

 With the assumption that the time delay between receive antennas is negligible, and
the path loss between the $k$th drone and  $\ell$th receive antenna is the same for all receive antennas, i.e,
$L_{\ell,k}=L_k$, $k=1,2,\ldots,K$, $\ell=1,2,\ldots,N_{\rm r}$,
the received complex baseband  signal at the multiple-receive antennas can be expressed as
\begin{align}\label{eq:observation}
{\bf Y} = {\bf H} {\bf X} + {\bf W},
\end{align}
where ${\bf X} \triangleq [{\bf x}_1 \ {\bf x}_2 \ \ldots \ {\bf x}_K]^T \in \mathbb{C}^{K \times (N+1)}$, ${\bf Y} \triangleq [{\bf y}_0 \  {\bf y}_1 \ldots \ {\bf y}_{N}]\in \mathbb{C}^{N_{\rm{r}}\times (N+1)}$, ${\bf x}_k$ is given by \eqref{Vector}, and ${\bf y}_{n} \triangleq [y_{1,n} \ y_{2,n} \ \ldots \ y_{N_{\rm r},n}]^T$ denotes the received vector at time index $n$. In \eqref{eq:observation},
the matrices
${\bf{H}}\in \mathbb{C}^{N_{\rm{r}}\times K}$ and ${\bf{W}}\in \mathbb{C}^{N_{\rm{r}}\times (N+1)}$ are given as
\begin{align}
{\bf{H}} = \left[ \begin{array}{l}
{\bf{h}}_1^T\\
{\bf{h}}_2^T\\
\,\, \vdots \\
{\bf{h}}_{{N_{\rm{r}}}}^T
\end{array} \right], \,\,\,\,\,\,\,\,\,\
{\bf{W}} = \left[ \begin{array}{l}
{\bf{w}}_1^T\\
{\bf{w}}_2^T\\
\,\, \vdots \\
{\bf{w}}_{{N_{\rm{r}}}}^T
\end{array} \right],
\end{align}
where
\begin{align}
{\bf{h}}_\ell & \triangleq \big{[}h_{\ell,1} \ h_{\ell,2} \ \ldots \ h_{\ell,K}\big{]}^T \\ \nonumber
& = \big{[}\beta_1 \exp(j\theta_{\ell,1}) \ \ldots \ \beta_N \exp(j\theta_{\ell,K})\big{]}^T,
\end{align}
$\beta_k=\beta_{\ell,k} = \sqrt{P_kL_k}$, ${\bf w}_{\ell} \triangleq [w_{\ell,0} \ w_{\ell,1} \ \ldots \ w_{\ell,N}]^T$ with $w_{\ell,n} \sim \mathcal{C} \mathcal{N}  (0,\sigma_{\rm{w}}^2)$ the complex Gaussian noise at the $\ell$th receive antenna at time index $n$.
As seen, while $\beta_{1,k}=\beta_{2,k}=\cdots=\beta_{N_{\rm{r}},k}=\beta_k$, the phases $\theta_{1,k}, \theta_{2,k}, \ldots, \theta_{N_{\rm{r}},k}$ are independent random values in $[0 \ 2 \pi)$.

The joint \ac{pdf} of the elements of ${\bf Y}$ is given by
\begin{align}\label{joint_pdf_ma}
f_{\bf Y}({\bf{Y}};p,{\bm \beta},{\bm \Theta},\sigma_{\rm{w}}^2) = \prod_{\ell=1}^{N_{\rm r}} \prod_{n=0}^{N}  f_{Y}({{y}_{\ell,n}};p,{\bm \beta},{\bm \theta}_\ell,\sigma_{\rm{w}}^2),
\end{align}
where ${\bm \beta} \triangleq [\beta_1 \ \beta_2 \ \ldots \ \beta_K]^T$, ${\bm \Theta} \triangleq \big{[}{\bm \theta}_1^T \ {\bm \theta}_2^T \ \ldots \ {\bm \theta}_{N_{\rm r}}^T \big{]}^T$, ${\bm \theta}_{\ell}  \triangleq \big{[}\theta_{\ell,1} \ \theta_{\ell,2} \ \ldots \ \theta_{\ell,K} \big{]}^T$, and $f_{Y}({{y}};p,{\bm \beta},{\bm \theta},\sigma_{\rm{w}}^2)$ is given in \eqref{pdf_mixture}. We can easily write the direct \ac{mle} of the parameter vector $\bm{\beta}$ and ${\bm \Theta}$ as
\begin{align}\label{mle_ma}
\{\hat{\bm \beta}, \hat{\bm \Theta} \} =  \operatorname*{arg\,max}_{{\bm \beta},{\bm{\Theta}}} \sum_{\ell=1}^{N_{\rm r}}   \sum_{n=0}^{N} \ln f_{{Y}}\big{(}{ y}_{\ell,n};p,{\bm \beta },\bm{\theta}_\ell, \sigma_{\rm{w}}^2\big{)}.
\end{align}
Solving the maximization problem in \eqref{mle_ma} is not straightforward. Hence, the direct \ac{mle}  cannot be obtained.
However, indirect estimation can be obtained  by estimating the modes of the \ac{gm} similar to the single receive antenna.

Analogous to the single receive antenna scenario, we can employ the \ac{em} algorithm for estimating the modes of the \ac{gm}.
For multiple receive antennas, the \ac{em} algorithm
defines an identical latent random vector ${\bf u}\triangleq [u_0 \ u_1 \ \ldots \ u_N]^T$ for all receive antennas. This random vector determines the \ac{gm} component from which the observation originates, i.e.,
\begin{align}
f_{Y}({{y}_{\ell,n}}|u_n=a;p,{\bm \beta},{\bm \theta}_\ell,\sigma_{\rm{w}}^2) \sim \mathcal{C}\mathcal{N}(y_{\ell,n};\mu_{\ell,a},\sigma_{\rm{w}}^2) ,
\end{align}
where $n=0,1,\ldots,N$, $\ell =1,2,\ldots, {N_{\rm{r}}}$, $a= 0 ,1 , \dots ,2^K-1$,
${{\mu}}_{\ell,a}   \triangleq
\sum_{k=1}^{K}  (1-b_{k})h_{\ell,k}= \sum_{k=1}^{K}  (1-b_{k})\beta_{k}\exp(j\theta_{\ell,k})$,
$a=(b_K,b_{K-1},\ldots,b_1)_2, b_i \in \{0,1\}$, and $P_{U}(u_n=a)=\xi_a$.

The \ac{em} algorithm iteratively maximizes the expected value of the complete-data log-likelihood function to estimate the vector
 ${\bm \Gamma} \triangleq [{\bm \eta}_1^T \ {\bm \eta}_2^T \ \ldots \ {\bm \eta}_{N_{\rm r}}^T]^T$ as

\begin{align}\label{EM_ma}
{\bm \Gamma}^{(t+1)} =  \operatorname*{arg\,max}_{{\bm \Gamma}}C\big{(}{\bm \Gamma}|{\bm \Gamma}^{(t)}\big{)},
\end{align}
where the vector ${\bm \eta}_{\ell} \triangleq [\eta_{\ell,0} \ \eta_{\ell,1} \ \ldots \ \eta_{\ell,2^K-1}]^T$ denotes the $2^K$ modes of the 2D \ac{gm} at the $l$th receive antenna,
${\bm \Gamma}^{(0)}$ is the initialization vector,
\begin{align}\label{Expectation_a}
&C({\bm \Gamma}|{\bm \Gamma}^{(t)}) = \mathbb{E}_{{\bm U}|{\bm Y};{\bm{\Gamma}}^{(t)}}\Big{\{}\ln f_{{\bm Y},{\bm U}}({\bf Y},{\bf u};p,{\bm \Gamma},\sigma_{\rm{w}}^2)\Big{\}} \\ \nonumber
&=  \sum_{\ell=1}^{N_{\rm r}}
\sum_{n=0}^{N} \sum_{a=0}^{2^K-1} \delta_{a, n}^{(t)}
 \Bigg{(} \ln \frac{\xi_a}{\pi \sigma_{\rm w}^2}   - \frac{|y_{\ell,n}-\eta_{\ell,n}|^2}{\sigma_{\rm{w}}^2} \Bigg{)},
\end{align}
\begin{align}
\delta_{a,n}^{(t)} &= P_{U|{{\bm Y}}}\big{(}u_n=a|{\bf{y}}_n;p,{\bm{\Gamma}}^{(t)},\sigma_{\rm w}^2\big{)} \\ \nonumber
&=\frac{P_U(u_n=a)f_{{\bm{Y}}|{U}}\big{(}{\bf{y}}_n|u_n=a;p,{\bm{\Gamma}}^{(t)},\sigma_{\rm{w}}^2\big{)}}{\sum_{a=0}^{2^K-1}f_{\bm Y,U}({\bf y}_n,u_n=a;p,{\bm{\Gamma}}^{(t)},\sigma_{\rm w}^2)} \\ \nonumber
&=\frac{\xi_a
\prod_{\ell=1}^{N_{\rm{r}}}
\mathcal{C}\mathcal{N}\big{(} y_{\ell,n};\eta_{\ell,a}^{(t)},\sigma_{\rm{w}}^2)}{\sum_{a=0}^{2^K-1}\xi_a \prod_{\ell=1}^{N_{\rm r}} \mathcal{C}\mathcal{N}( y_{\ell,n};\eta_{\ell,a}^{(t)},\sigma_{\rm{w}}^2)},
\end{align}
and the joint complete-data likelihood function is given by
\begin{align}\label{likelihood_MA}
f_{{\bm Y},{\bm U}}({\bf Y},&{\bf u};p,\bm{\Gamma},\sigma_{\rm{w}}^2) \\ \nonumber
&=
\prod_{\ell=1}^{N_{\rm{r}}}
\prod_{n=0}^{N} \prod_{a=0}^{2^K-1}
\big{(} \xi_a \mathcal{C}\mathcal{N}( y_{\ell,n};\mu_{\ell,a},\sigma_{\rm{w}}^2)\big{)}^{\mathbb{I}\{u_n=a\}}.
\end{align}

By solving the maximization problem in \eqref{EM_ma},
the elements of ${\bm \eta}^{(t+1)}$ are iteratively updated as follows
\begin{align}\label{mode_update}
\eta_{\ell,a}^{(t+1)}=\frac{\sum_{n=0}^{N}  \delta_{a,n}^{(t)}y_{\ell,n}}{\sum_{n=0}^{N}  \delta_{a,n}^{(t)}}.
\end{align}
The convergence condition for the \ac{em} algorithm is $\|{\bm \Upsilon}^{(t+1)}-{\bm \Upsilon}^{(t)}\|<\epsilon$, where $\epsilon$ is a preset threshold. We denote $\hat{\eta}_{\ell,a}={\eta}_{\ell,a}^{(t+1)}$ when the \ac{em} algorithm converges at the $(t+1)$th iteration.

While the \ac{em} algorithm estimates $N_{\rm r}2^K$ parameters, only $N_{\rm r}K$ parameters are used for joint ranging and \ac{po} estimation of $K$ drones. These $N_{\rm r}K$ modes are ${\mu}_{\ell,{a_1}}, {\mu}_{\ell,a_2}, \ldots , {\mu}_{\ell,a_{K}}$, where $a_k$ is defined in \eqref{ak}.
After the EM convergence, each receive antenna independently applies estimation mapping as explained in section \ref{Estimation_Map}.
Let $\hat{\mu}_{\ell,{a_1}}, \hat{\mu}_{\ell,a_2}, \ldots , \hat{\mu}_{\ell,a_{K}}$ denote the estimated and reordered modes
at the $\ell$th receive antenna. By averaging, then we can write
\begin{align}\label{EG2}
|\hat{\mu}_{{a_k}}|=\frac{1}{N_{\rm r}} \sum_{\ell=1}^{N_{\rm r}}|\hat{\mu}_{\ell,{a_k}}|\propto \frac{1}{\hat{r}_{k}},
\end{align}
where $k=1,2,\ldots,K$.
By substituting \eqref{EG2} into \eqref{range_es}, we can estimate the range of the $k$th drone.
The \ac{po} for each drone-receive antenna is obtained by replacing $\hat{\mu}_{\ell,{a_1}}, \hat{\mu}_{\ell,a_2}, \ldots , \hat{\mu}_{\ell,a_{K}}$, $\ell=1,2,\cdots,N_{\rm r}$ into \eqref{phase_es}.

It should be mentioned that the \ac{em} algorithm can also be implemented independently at each receive antennas. Then, after reordering estimation, the ranges are obtained by averaging. This results in a lower complexity solution.

\subsection{Outlier Detection}
For multiple receive antennas at the receiver,
it is possible that one or multiple outliers appear in the estimated sequence
 $|\hat{\mu}_{\ell,{a_1}}|, |\hat{\mu}_{\ell,a_1}|, \ldots , |\hat{\mu}_{\ell,a_{K}}|$, where $a_k$ is given in \eqref{ak}.
 These outliers can increase the estimation error of $|\hat{\mu}_{{a_k}}|$ in \eqref{EG2} that is used for range estimation.  To  remove the effect of outliers, we can use outlier detection algorithms, such as
the median absolute deviation (MAD) \cite{chandola2009anomaly}. In this case, the averaging is performed over the receive antennas without outliers. This results in a significant performance improvement in range estimation.

\subsection{Computational Complexity Analysis}
In each iteration of the \ac{em} algorithm, we evaluate $N_{\rm r} 2^K$ Gaussian densities for $N+1$ observation
points in the E-step in \eqref{Expectation_a}, and its computational cost scales as $O(N_{\rm r}N2^K)$. The computational cost of the M-step in \eqref{mode_update} per iteration also scales as $O(N_{\rm r}N2^K)$.
The complexity of the $k$-means$++$ initialization is $O(N_{\rm r}N2^K)$. The number of combinatorial search for reordering estimation via
the \ac{ls} and the proposed
linear/non-linear combinations minimization
 is $(2^K-1)!$ and $C_Q^{2^K-1}$, $Q=1,2,\ldots,2^K-1$, (depending on the selected linear/non-linear combinations), respectively. Hence, for $K \geq 4$, the method of linear/non-liner combinations minimization
 is more computationally efficient for reordering estimation.

\begin{table}\label{nonumber}
\centering
{
\caption{Operation parameters for the simulation \cite{series2017reception}.}
\vspace{1em}
\begin{tabular}{ |l|l| }
  \hline
  \multicolumn{2}{|c|}{ADS-B parameters for the simulation} \\
  \hline
  $P_{\rm t}=51$ dBm & ADS-B transmit power \\
  $f_{\rm c}=1090$ MHz & Carrier frequency of the ADS-B system \\
  $B_3=1.3$ MHz & 3 dB bandwidth of the transmit ADS-B signal \\
  $B_{20}=7$ MHz & 20 dB bandwidth of the transmit ADS-B signal \\
  $B_{40}=23$ MHz & 40 dB bandwidth of the transmit ADS-B signal \\
  $B_{60}=78$ MHz & 60 dB bandwidth of the transmit ADS-B signal \\
  $T_{\rm r}=0.01 \mu {\rm s}$ & Rise time of the ADS-B trapezoidal transmit pulse \\
  $T_{\rm d}=0.01 \mu {\rm s} $ & Decay time of the ADS-B trapezoidal transmit pulse \\
  $T=0.48 \mu {\rm s}$ & Time of the ADS-B trapezoidal transmit pulse \\
  \hline
\end{tabular}}\label{tabale:2}
\end{table}

\section{Simulation}\label{simulation}
In this section, we examine the performance of the proposed
\ac{em}-based joint range and \ac{po} estimator through several simulation experiments to confirm the effectiveness of the proposed algorithm.

\subsection{Simulation Setup}
Unless otherwise mentioned, we considered $K$ drones with ranges $r_1,r_2,\ldots,r_K \in \mathcal{U}_{\rm c}[1,10]$ Km.
The azimuth and elevation angles of the drones are assumed to be $\phi_1,\phi_2,\ldots, \phi_K \in \mathcal{U}_{\rm c}[-\pi,\pi)$ and $\psi_1,\psi_2,\ldots,\psi_K \in \mathcal{U}_{\rm c}[0,\pi/2]$, respectively.
We considered free space path loss model in \eqref{loss_range} and $\lambda_{\rm{c}} \approx 0.2752 $ m for ADS-B systems.
The number of receive antennas was set to $N_{\rm r}=5$.
%
%
%
For ADS-B message period of $T_{\rm P}=240 \mu {\rm s}$, the discrete time delay of each drone for each ADS-B packet transmission is assumed to be modeled by a discrete \ac{iid} random variable with uniform distribution as
$m_1,m_2,\ldots,m_K \in \mathcal{U}_{\rm d}[0,M]$, where $M \triangleq  \lfloor 2B T_{\rm p} \rfloor$ with $B$ as the bandwidth of the square-root-raised-cosine (SRRC) receive baseband filter $h(t)$ with roll-off factor $\beta=0.9$ and group delay $\tau_{\rm gr}=47.25 \mu {\rm s}$.
Without loss of generality, we consider that $E_{\rm h} \triangleq \int_{-\infty}^{+\infty} |h(t)|^2 {\rm d}t=\int_{-B}^{+B} |H(f)|^2 {\rm d}f =1$, where $H(f)$ is the frequency response of the receive filter.

We also assumed that the \ac{po} between  the $k$th drone and the $\ell$th receive antenna is modeled by a continuous \ac{iid} random variable with uniform distribution as $\theta_{\ell,k} \in \mathcal{U}_{\rm c}[0,2\pi)$, $k=1,2,\ldots,K$, $\ell=1,2,\ldots,N_{\rm r}$. Without loss of generality, we considered that $P_1=P_2=\dots=P_K=51$ dBm \cite{series2017reception}, and
the noise power in dBm at each receive antennas is considered to be $\sigma_{\rm w}^2 = 10\log_{10} ((N_0 E_{\rm h})/10^{-3})= -174$.


\begin{figure}[!t]
\vspace{0.3em}
\centering
\includegraphics[width=3.1in]{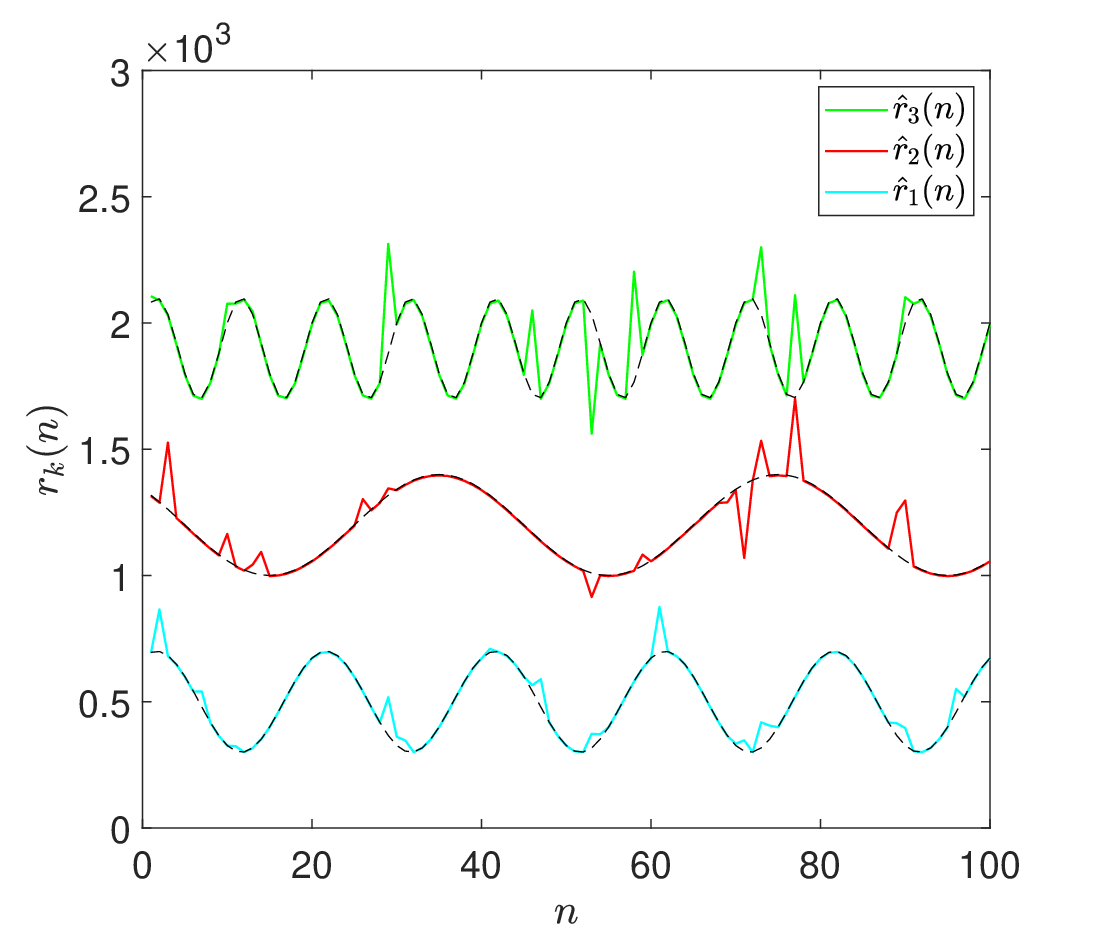}
\vspace{0.12em}
\caption{{Range tracking by the proposed \ac{em}-based estimator for $K=3$ drones. The dashed and solid lines denote the true and estimated range, respectively.} }\label{Range_Tracking}
\end{figure}

The performance of the proposed \ac{em}-based joint range and \ac{po} estimator was evaluated in terms of $1-P_{{\rm{out}},{r}}$ and $1-P_{{\rm{out},{\rm \theta}}}$
where
\begin{align}\nonumber
P_{{\rm{out}},r} \triangleq \frac{1}{K}\sum_{k=1}^{K} \mathbb{P}\bigg{\{}\frac{|\hat{r}_k-r_k|}{r_k} > \alpha_r \bigg{\}},
\end{align}
and
\begin{align}\nonumber
P_{{\rm{out}},\theta} \triangleq \frac{1}{KN_{\rm{r}}}\sum_{\ell=1}^{N_{\rm r}} \sum_{k=1}^{K} \mathbb{P}\bigg{\{}\frac{|\hat{\theta}_{\ell,k}-\theta_{\ell,k}|}{\theta_{\ell,k}} > \alpha_\theta \bigg{\}},
\end{align}
with $\hat{r}_k$ as the estimated range of the $k$th drone and $\hat{\theta}_{\ell,k}$ as the \ac{po} of the $k$th drone at the $\ell$th receive antenna.
The \ac{em} algorithm was run with $100$ different random initial values by using the $k$-means$++$ initialization,
and we considered the minimization problem in \eqref{eq:ls2} for reordering estimation.
 The number of Monte Carlo runs is  $10^4$.


\subsection{Simulation Results}
{To illustrate the performance of the proposed \ac{em}-based joint estimator over time,  we show range tracking for $K=3$ drones with transmit power $P_1=P_2=P_3=51$ dBm in Fig.~\ref{Range_Tracking}. It is considered that the ADS-B packets overlap.
The range, delay, and \ac{po} variations versus index of the ADS-B packet, $n$, for the three drones are modeled as
$r_1(n)= 500+200 \cos(0.1 \pi n+\pi /3)$, $r_2(n)= 1200+200 \cos(0.05 \pi n+2\pi/4)$,
$r_3(n)= 1900+200 \cos(0.2 \pi n+\pi/6)$, $m_k(n) \in \mathcal{U}_{\rm d}[0,17280]$,
$\theta_{\ell,k}(n) \in \mathcal{U}_{\rm c}[0,2\pi)$ for $\ell=1,2,\ldots,5$, $k=1,2,3$, $n=1,2,\ldots,100$, and $B=36$ MHz.  We consider that $\mathbb{E}\{\theta_{\ell_1,k_1}(n_1) \theta_{\ell_2,k_2}(n_2)\}=(\pi^2/3) \delta[\ell_1-\ell_2]
 \delta[k_1-k_2]\delta[n_1-n_2]$.
As seen, the proposed \ac{em}-based estimator can accurately track the range of $K=3$ drones. While the estimation error for the closest drone to the receiver is lower, i.e., $\mathbb{E}\{|{r}_1-\hat{r}_1|\}^2 < \mathbb{E}\{|{r}_2-\hat{r}_2|\}^2 < \mathbb{E}\{|{r}_3-\hat{r}_3|\}^2$,
$1-P_{{\rm{out}},r}$ is almost the same for all drones.}
%

\begin{figure*}[h!]
\centering
\subfloat[\small{Range estimation for $K=2$ drones}]{%
  \includegraphics[width=8cm]{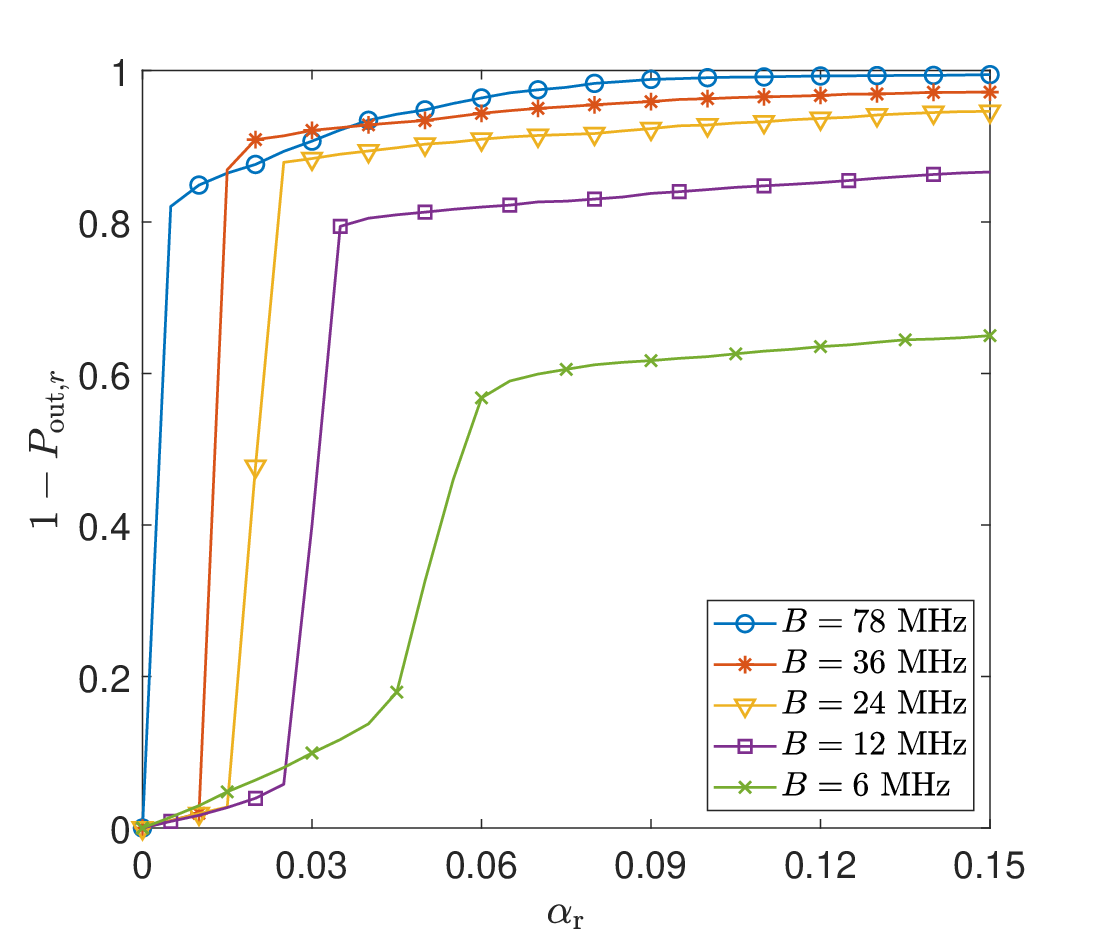}%
  \label{fig:evaluation:revenue}%
}\qquad
\subfloat[\small{\ac{po} estimation for $K=2$ drones}]{%
  \includegraphics[width=8cm]{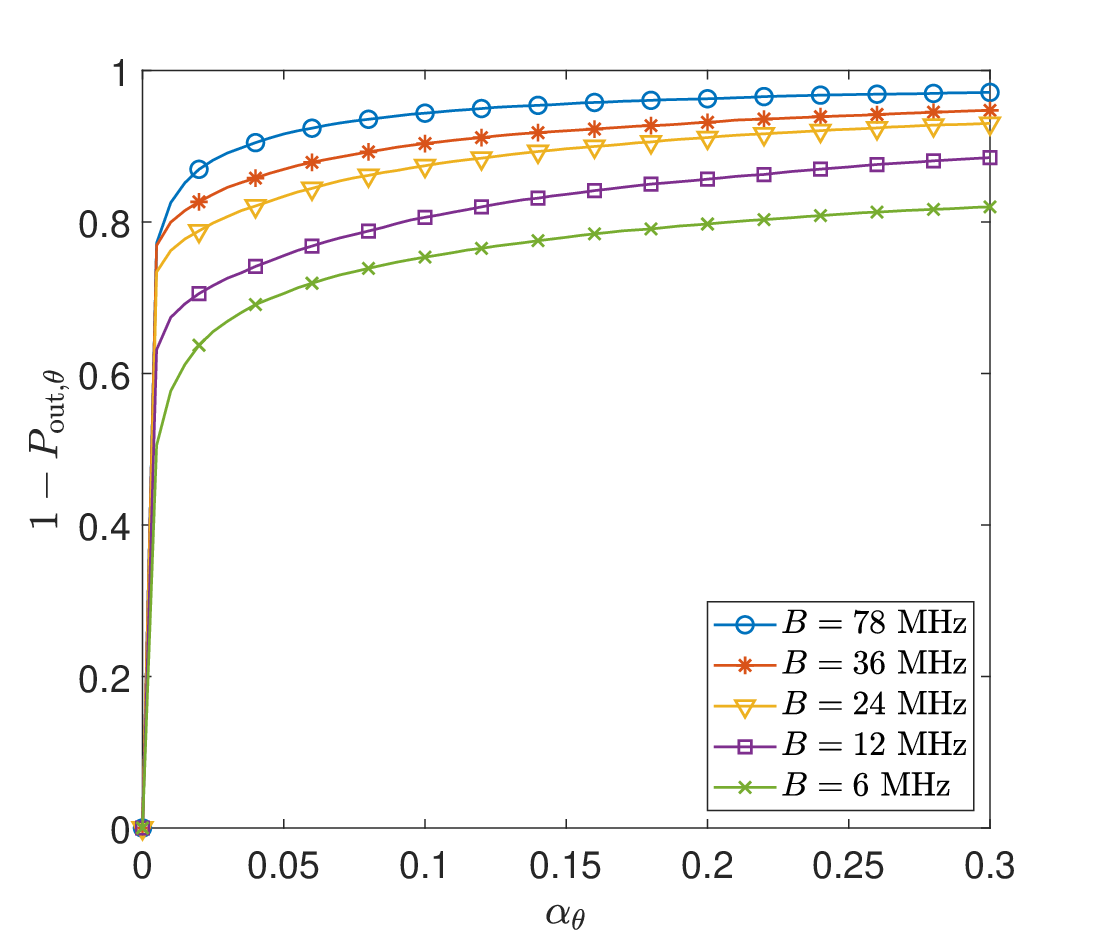}%
  \label{fig:evaluation:avgPrice}%
}
\\
\subfloat[\small{Range estimation for $K=3$ drones}]{%
  \includegraphics[width=8cm]{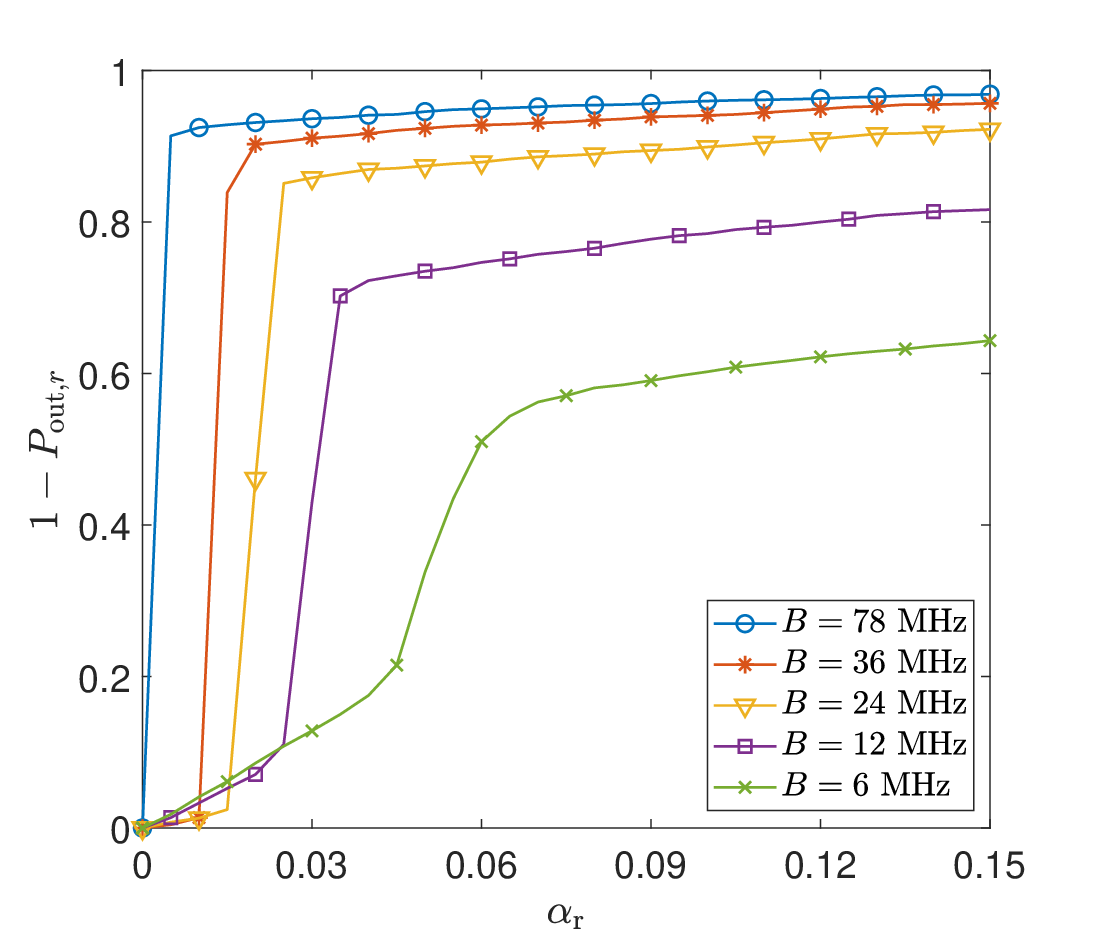}%
  \label{fig:evaluation:revenue}%
}\qquad
\subfloat[\small{\ac{po} estimation for $K=3$ drones}]{%
  \includegraphics[width=8cm]{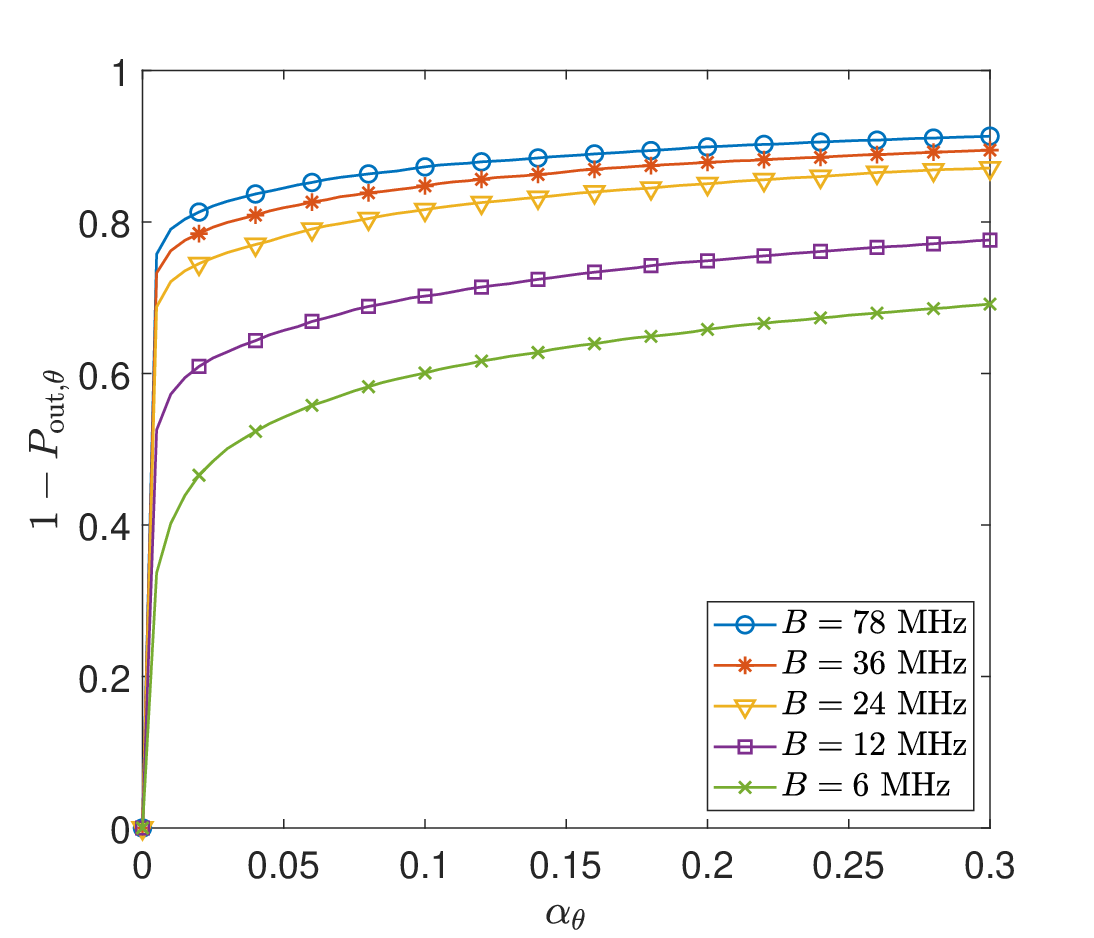}%
  \label{fig:evaluation:avgPrice}%
}
\\
\subfloat[\small{Range estimation for $K=4$ drones}]{%
  \includegraphics[width=8cm]{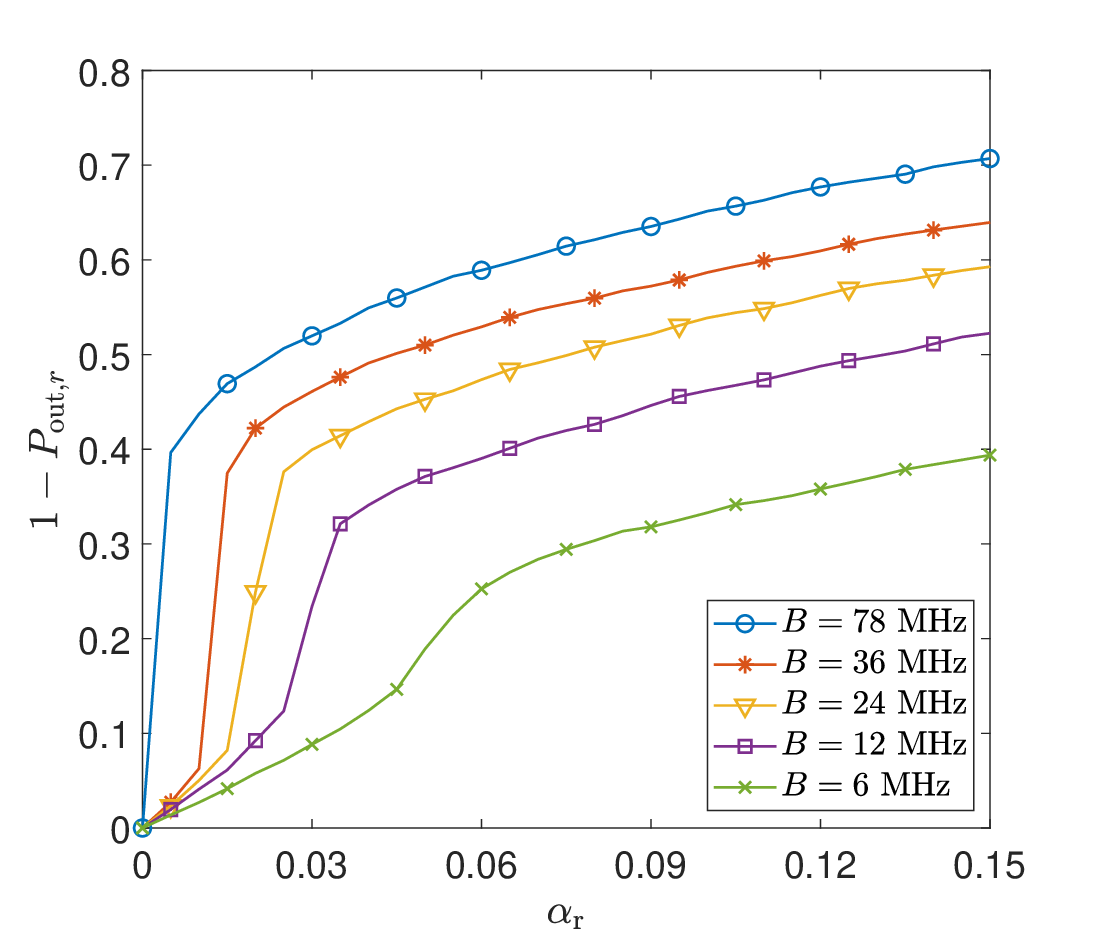}%
  \label{fig:evaluation:revenue}%
}\qquad
\subfloat[\small{\ac{po} estimation for $K=4$ drones}]{%
  \includegraphics[width=8cm]{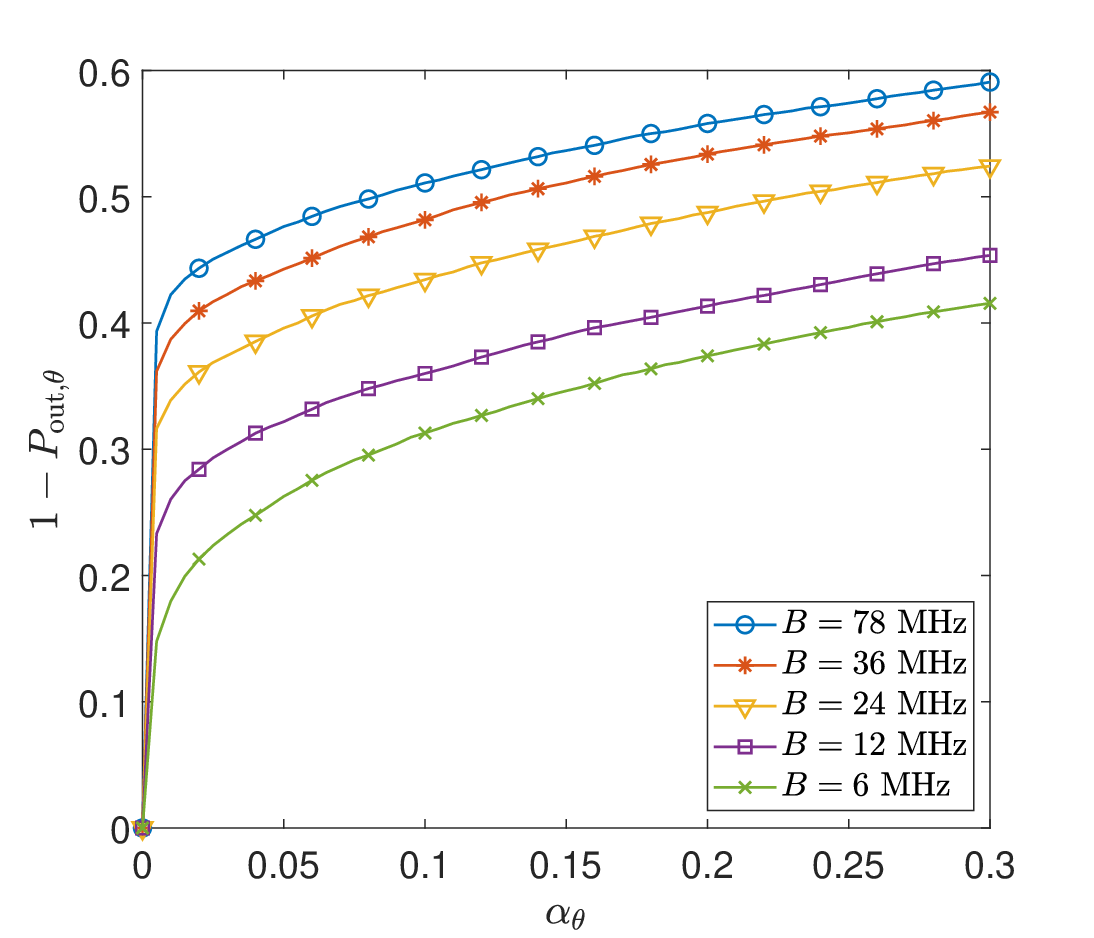}%
  \label{fig:evaluation:avgPrice}%
}
\\
\caption{The performance of the proposed joint \ac{em}-based estimator  for different number of drones, $K$,
 $P_1=P_2=\ldots=P_K=51$ dBm, $r_1,r_2,\ldots r_K\in \mathcal{U}_{\rm c}[1,10]$ Km,
and $N_{\rm r}=5$.}\label{fig:nioio}
\end{figure*}

{Fig.~\ref{fig:nioio} illustrates $1-P_{{\rm{out}},r}$ and $1-P_{{\rm{out}},\theta}$ of the proposed \ac{em}-based joint estimator versus $\alpha_{\rm r}$ and $\alpha_\theta$ for different number of drones $K=2,3,4$ and different values of receive filter bandwidth $B$. As seen, by increasing $B$, the performance of the
\ac{em}-based joint estimator improves because sharp pulses are received at the receiver, and thus; the approximation error of received signal model in \eqref{Observation} reduces.
Moreover, as expected, the larger number of drones, higher estimation error.
The reason is that the number of \ac{gm} components to be estimated by the \ac{em} algorithm exponentially increases with $K$, i.e., $2^K$; however, the number of observation samples remains fixed.}

\begin{figure}[h!]
\vspace{-3em}
\centering
\subfloat[{\small{The effect of $N_{\rm r}$ on range estimation}}]{%
  \includegraphics[width=8cm]{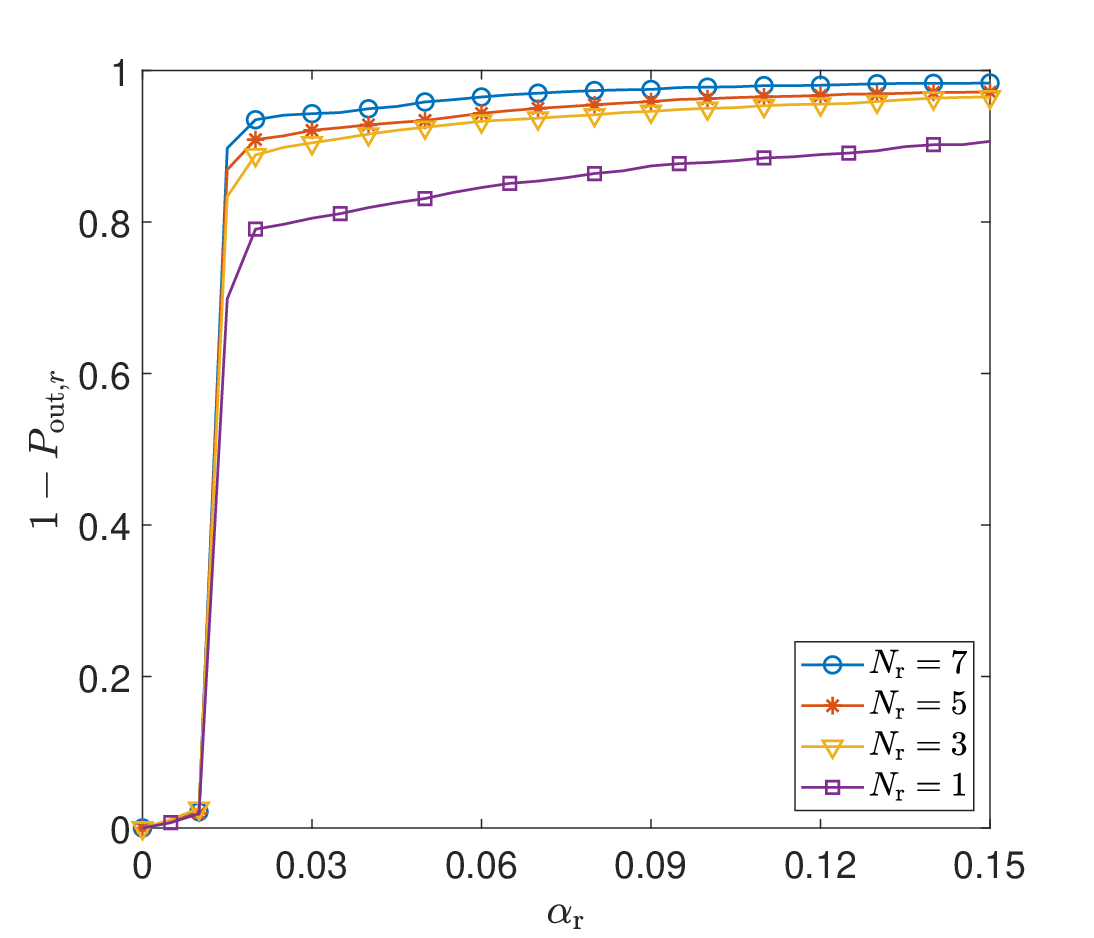}%
  \label{fig:evaluation:revenue1}%
}\\
\subfloat[\small{The effect of $N_{\rm r}$ on \ac{po} estimation}]{%
  \includegraphics[width=8cm]{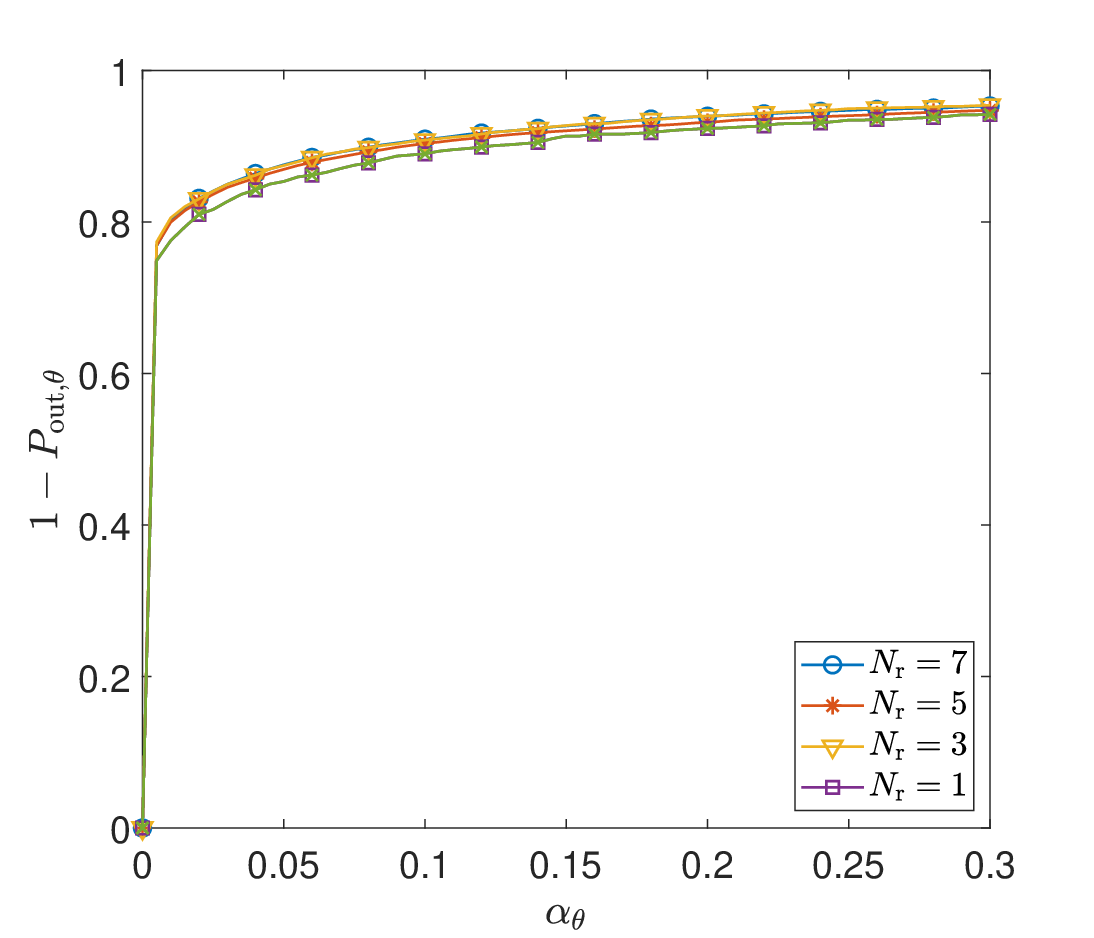}%
  \label{fig:evaluation:avgPrice2}%
}
\caption{{The effect of the number of receive antennas, $N_{\rm r}$, on the performance of the proposed \ac{em}-based joint estimator
for $K=2$ drones, $P_1=P_2=51$ dBm, $r_1,r_2\in \mathcal{U}_{\rm c}[1,10]$ Km, and $\theta_{1,1},\theta_{1,2},\theta_{2,1},\theta_{2,2}, \in \mathcal{U}_{\rm c}[-\pi,\pi)$
.}}\label{Antenna_effect}
\end{figure}

\begin{figure}[h!]
\vspace{-3em}
\centering
\subfloat[{\small{The effect of minimum overlapping on range estimation}}]{%
  \includegraphics[width=8cm]{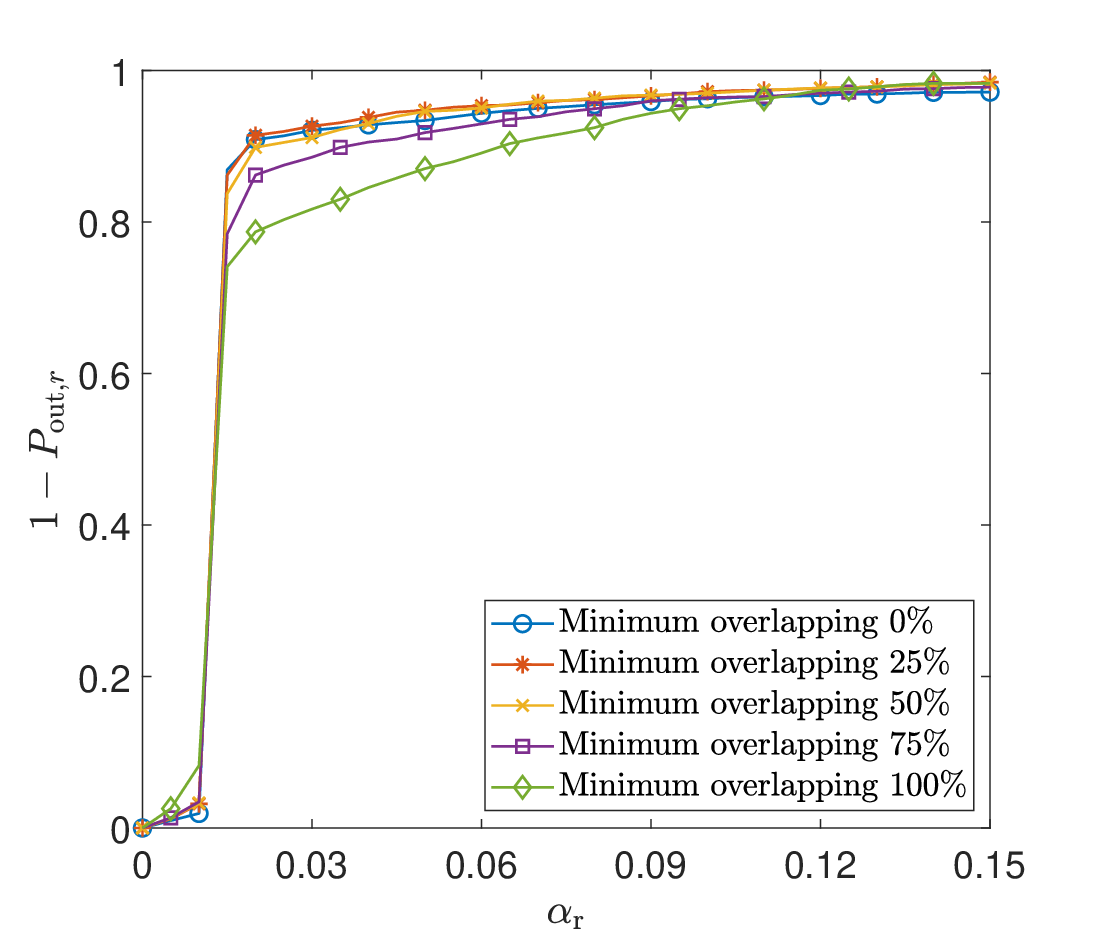}%
  \label{fig:evaluation:revenue1}%
}\\
\subfloat[\small{The effect of minimum overlapping on \ac{po} estimation}]{%
  \includegraphics[width=8cm]{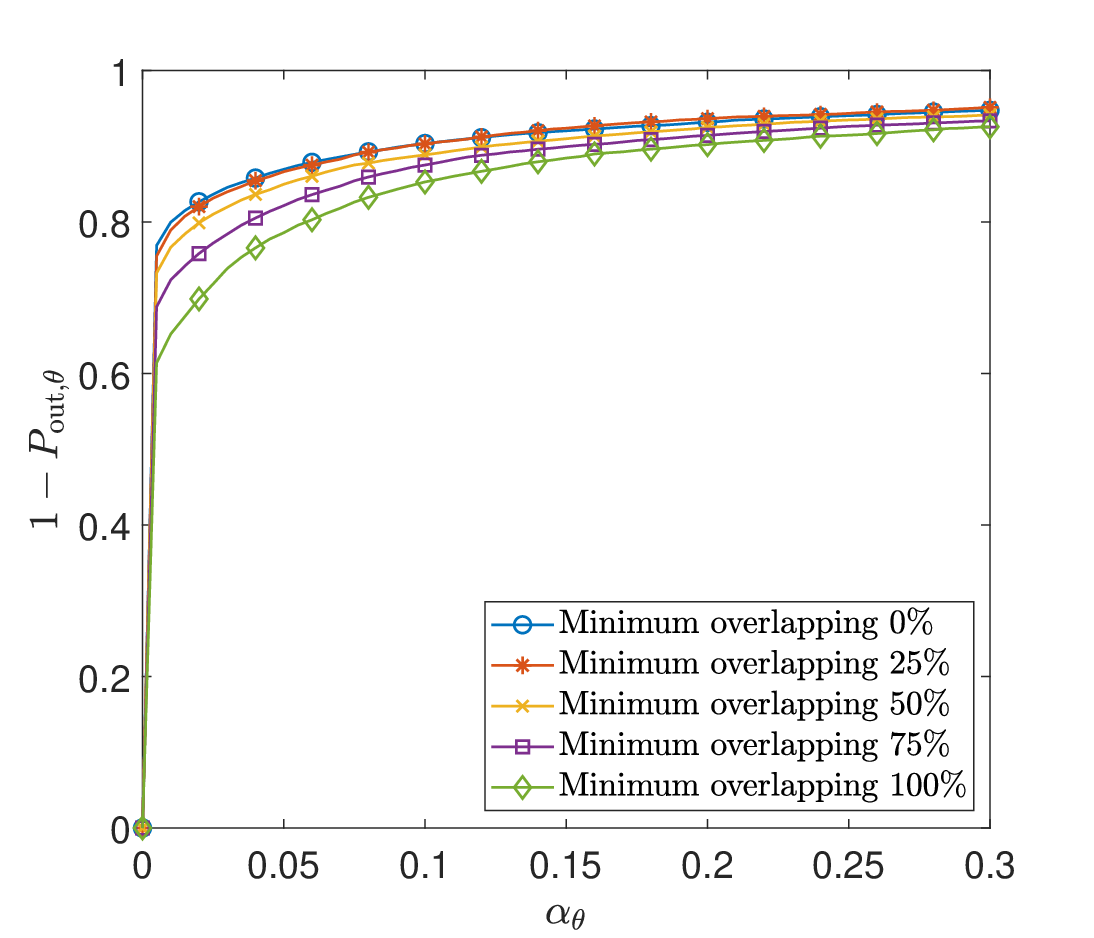}%
  \label{fig:evaluation:avgPrice2}%
}
\caption{{The effect of the minimum overlapping percentage on the performance of the proposed \ac{em}-based joint estimator for $K=2$ drones, $P_1=P_2=51$ dBm, $r_1,r_2\in \mathcal{U}_{\rm c}[1,10]$ Km, $\theta_{1,1},\theta_{1,2},\theta_{2,1},\theta_{2,2}, \in \mathcal{U}_{\rm c}[-\pi,\pi)$, and $N_{\rm r}=5$
.}}\label{Overlapping}
\end{figure}


{Fig.~\ref{Antenna_effect} shows the effect of the number of receive antennas on the range and \ac{po} estimation versus $\alpha_{\rm r}$
and $\alpha_{\rm \theta}$
for $K=2$ drones,
$P_1=P_2=51$ dBm, and $B=36$ MHz.
As seen, the larger number of receive antennas, $N_{\rm r}$, higher  $1-P_{{\rm{out}},r}$.
Also, the rate of performance improvement decreases as the number of receive antennas increases.}
As expected,
increasing the number of receive antennas does not affect $1-P_{{\rm{out}},\theta}$ because spatial diversity
is not employed for the estimation of the $KN_{\rm r}$ independent \ac{po}s. However, the range estimation takes the advantage of spatial diversity.
The interesting property of the proposed algorithm is that it can jointly estimate the range and \ac{po} of multiple drones/aircrafts with a single receive antenna. With a single receive antenna, the range accuracy of $\alpha_r=0.03$ is archived for $80\%$ of the time.

\begin{figure}[h!]
\centering
\subfloat[{Performance comparison for range estimation}]{%
  \includegraphics[width=8cm]{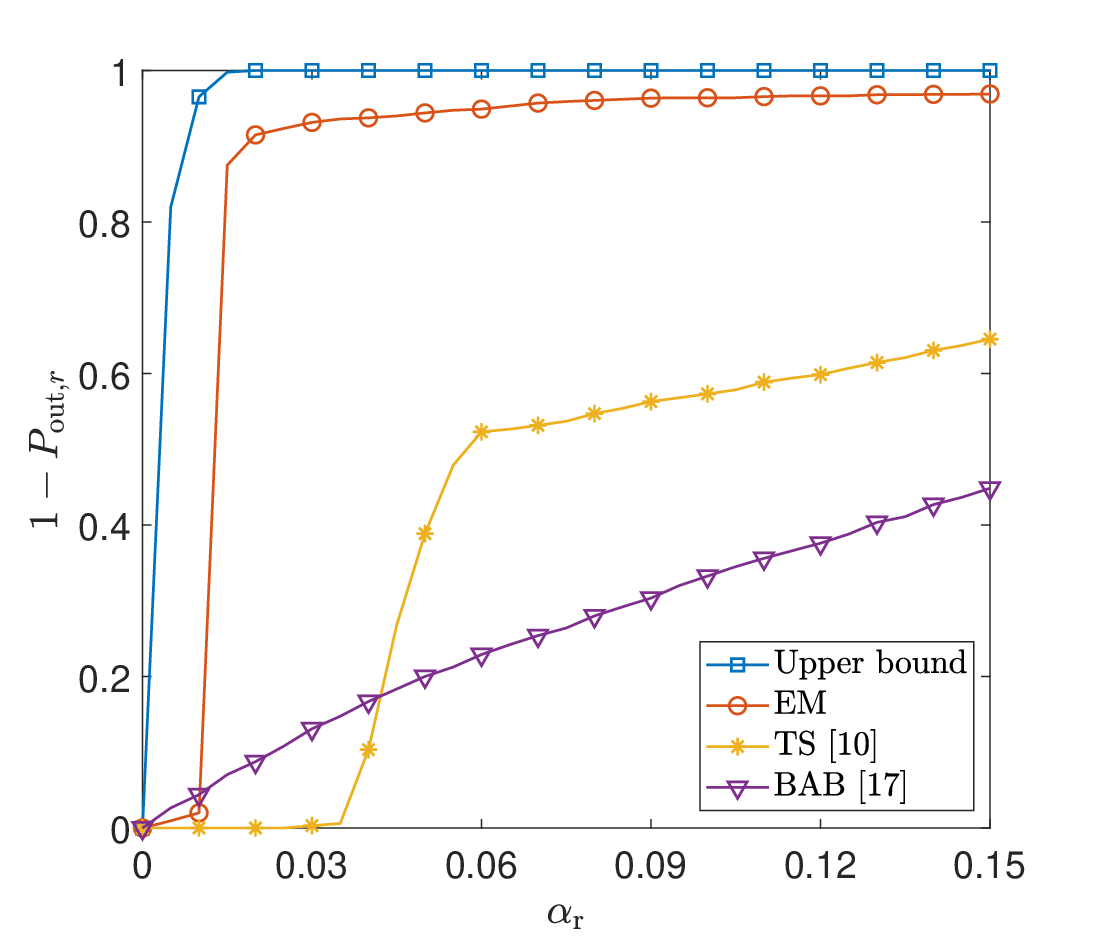}%
  \label{Compare_1}%
}\\
\subfloat[\small{Performance comparison for \ac{po} estimation}]{%
  \includegraphics[width=8cm]{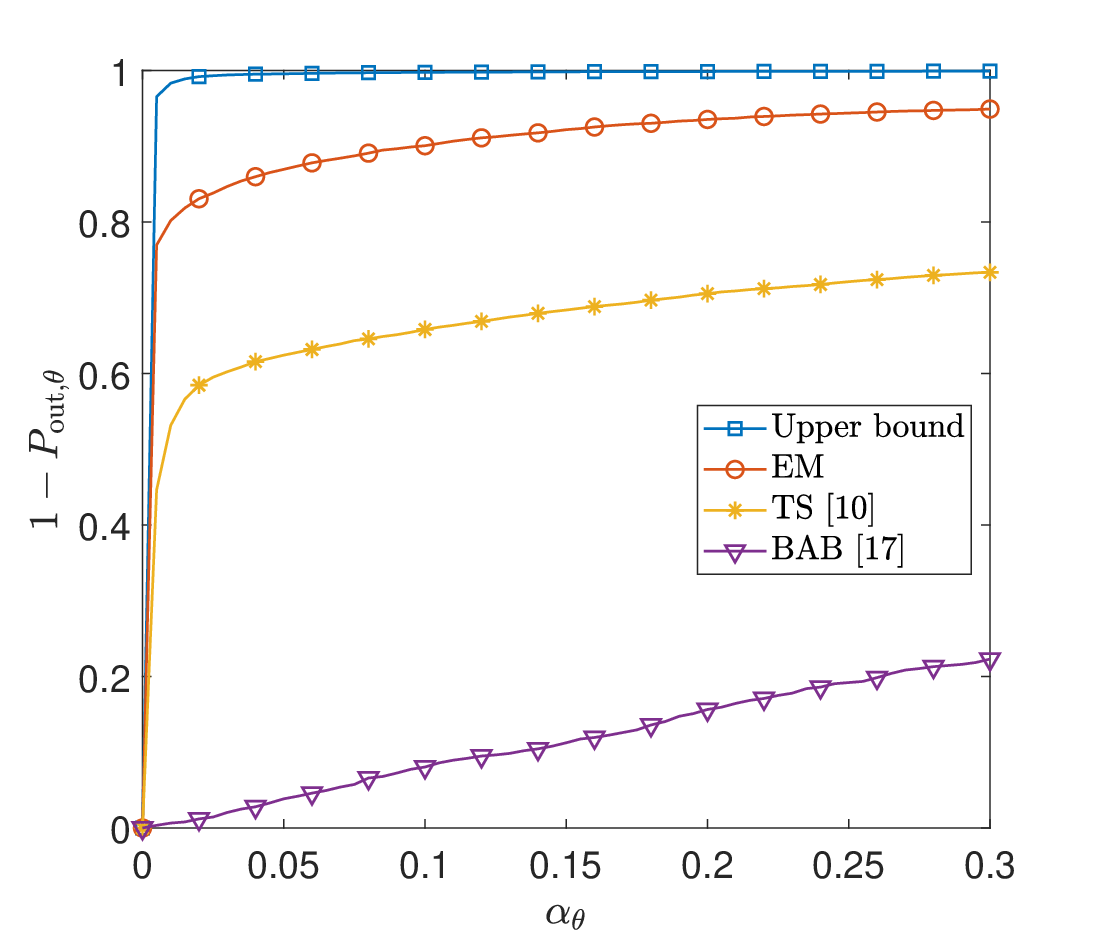}%
  \label{Compare_2}%
}
\caption{{Performance comparison of the proposed \ac{em}-based joint estimator for  $K=2$ drones, $P_1=P_2=51$ dBm, $r_1,r_2\in \mathcal{U}_{\rm c}[1,10]$ Km, $\theta_{1,1},\theta_{1,2},\theta_{2,1},\theta_{2,2}, \in \mathcal{U}_{\rm c}[-\pi,\pi)$, and $N_{\rm r}=5$.
 .}}\label{Compare}
\end{figure}

Fig.~\ref{Overlapping} illustrates the performance of the proposed
\ac{em}-based joint estimator for different minimum
overlapping percentage of the ADS-B packets for $K=2$ drones/aircrafts, $N_{\rm r}=5$ receive antennas, and $B=36$ MHz. As expected,
the lower minimum
overlapping percentage, more accurate ranging and \ac{po} estimation because
interference decreases.

In Fig.~\ref{Compare}, we compare the performance of the proposed  \ac{em}-based  joint estimator with
the time segmentation (TS) and the blind adaptive beamforming (BAB)
ADS-B packet separation methods
in \cite{li2021reliable} and \cite{wang2019ads} for  $K=2$ drones/aircrafts, $N_{\rm r}=5$ receive antennas, and $B=36$ MHz.
The TS and BAB methods first separate the ADS-B packets. Then, by using the separated packets, they can estimate the range and \ac{po} of the drones.
As seen, our proposed method outperforms the TS and the BAB methods because it employs all the observation samples including
the overlapping snapshot
 for ranging and \ac{po} estimation; however, the TS and the BAB methods
 rely on the non-overlapping snapshot for ADS-B packet recovery. Hence, as the delay between the reception of two ADS-B packets decreases, their performance degrades.
It should be mentioned that while the TS and the BAB methods can be used to estimate the range of maximum $K=2$ drones; our proposed method can estimate the range of $K>1$ drones with a single receive antenna.
In Fig. \ref{Compare}, we also show performance of the efficient estimator\footnote{An efficient estimator is an unbiased estimator that attains the  Cramer-Rao Lower Bound (CRLB) \cite{wang2009cramer}.} \cite{kay1993fundamentals}. For the efficient estimator, the transmit symbols and delay of the drones/aircrafts are assumed to be known a priori at the receiver.
As seen, there is a small gap between the performance of the proposed estimator and the efficient estimator while the transmit symbols and
delay of the drones/aircraft are unknown to the \ac{em}-based estimator.

%
%
%

%

\section{Conclusion}\label{conclusion}
In this paper, we showed that the lost ADS-B packets due to packet collisions can be employed to jointly estimate range and \ac{po} of multiple drones/aircrafts in the airspace. This enables drones to maintain safe operation distance in the congested airspace, where
ADS-B packet decoding is impossible due to packet collisions.
To achieve this, we derived the maximum likelihood and \ac{em}-based joint range and \ac{po} estimators using an approximate \ac{pdf} obtained by \ac{kld} minimization. The proposed estimators consider uncoordinated and asynchronous ADS-B packet transmission. For joint range and \ac{po} estimation, a priori knowledge or estimation of the drones' time delay is not required.
Simulation results showed that the \ac{em}-based joint estimator can estimate the range of multiple drones/aircrafts with a single receive antenna. Performance improvement by employing multiple receive antennas is obtained.

\appendices
\section{Proof of Theorem 1}\label{Appendix_1}
The vector ${\bf x}_{k}$ given hypothesis $H_m^{k}$ in \eqref{As}
is composed of
$4$ ones and $12$ zeros  in ${\bf s}$. The number of zeros and ones in the data field ${\bf d}_{k}$ is $112$ due to Manchester encoding.  Moreover, $M$ zeros are added irrespective to the hypothesis $H_m^{k}$ because of the maximum integer delay. Hence, the total number of zeros and ones in ${\bf x}_{k}$ are ${M+124}$ and ${116}$, respectively.

Let $\mathcal{X}_m^{k}$ denote all possible vector for ${\bf x}_{k}$ given hypothesis $H_m^{k}$.
The cardinality of $\mathcal{X}_m^{k}$, i.e., $|\mathcal{X}_m^{k}|=2^{112}$ since the randomness in
${\bf x}_{k}$ because of the data field ${\bf d}_{k}$ results in $2^{112}$ different Manchester encoded sequences.
Thus; the joint \ac{pmf} of $x_{k,0}, x_{k,1}, \cdots, x_{k,N}$
is given by
\begin{align}\label{Conditional}
f({\bf x}_k)=f\big{(}& x_{k,0}, x_{k,1}, \ldots, x_{k,N}|H_m^{k}\big{)} \\ \nonumber
&=
\begin{cases}
\frac{1}{2^{112}}  \,\,\,\,\,\,\,\,\ {\bf x}_{k} \in \mathcal{X}_m^{k},\\
0 \,\,\,\,\,\,\,\,\,\,\,\,\,\,\,\,\ {\bf x}_{k} \notin \mathcal{X}_m^{k}.
\end{cases}
\end{align}
Let us consider the following approximation for the joint \ac{pmf} in \eqref{Conditional}.
\begin{align}\label{pdf_approximation}
f({\bf x}_k)=f\big{(} x_{k,0}, x_{k,1}, & \ldots, x_{k,N}|H_m^{k}\big{)} \\ \nonumber
& \approx \prod_{n=0}^{N}  g(x_{k,n}|H_m^{k})
\end{align}
The \ac{kld} for $f$ and $g$ in \eqref{pdf_approximation} is given by
\begin{align} \nonumber
D&(f\|g)  = \sum_{\mathcal{X}_m^{k}} \bigg{[} f\big{(} x_{k,0}, x_{k,1}, \ldots, x_{k,N}|H_m^{k}\big{)} \\ \label{KLD}
& \,\,\,\ \times  \ln   \frac{f\big{(} x_{k,0}, x_{k,1}, \cdots, x_{k,N}|H_m^{k}\big{)}}{\prod_{n=0}^{N}  g(x_{k,n}|H_m^{k})}  \bigg{]}.
\end{align}
Because $x_{k,n} \in \{0,1\}$ for $n=0,1,\ldots,N$, the \ac{kld} is minimized for Bernoulli distribution. The \ac{pmf} of
Bernoulli distribution is expressed as
\begin{align}\label{Bernoulli}
g(i|H_m^{k}) = \begin{cases}
   p & \text{if }i=0, \\
    1-p & \text {if } i = 1,
 \end{cases}
\end{align}
where $p \in [0,1]$.

By substituting the \ac{pmf} in \eqref{Conditional} and \eqref{Bernoulli} into \eqref{KLD}, and by using the fact that $\sum_{n=0}^{N}x_{k,n}=116$
for ${\bf x}_{k} \in \mathcal{X}_m^{k}$, we can write the \ac{kld} in \eqref{KLD} as
\begin{align}
D(f\|g) & = \frac{1}{2^{112}} \sum_{\mathcal{X}_m^{k}} \ln \bigg{[} \frac{1}{2^{212} p^{M+124}(1-p)^{116}} \bigg{]} \\ \nonumber
& = - \ln \Big{[}{2^{212} p^{M+124}(1-p)^{116}} \Big{]}.
\end{align}
To obtain $p$, we need to minimize $D(f\|g)$.
Because $\ln(\cdot)$ is a monotonically increasing function,  $D(f\|g)$ is minimized when $p^{M+124}(1-p)^{116}$, $p \in [0,1]$,  is maximized.
By taking the derivative with respect to $p$ and setting it to zero, we obtain
\begin{align}\label{Derivative}
\diff{\big{(}p^{M+124}(1-p)^{116}\big{)}}{p} &={(M+124)}p^{M+123}(1-p)^{116} \\ \nonumber
&-116 (1-p)^{115}p^{M+124}=0.
\end{align}
By solving \eqref{Derivative}, we obtain \eqref{Bernoilli_Parameter}.

\section{Proof of the \ac{pdf} of $G=\sum_{k=1}^{K} Z_k$}\label{Appx_B}
Let us consider $G=\sum_{k=1}^{K} Z_k$. Since $Z_k$, $k=1,2,\cdots,K$, are independent random variables, the \ac{pdf} of $G$ is given by
\begin{align}\label{pdf_convolution}
 f_{G}(g;p,{\bf{h}}) & = f_{{Z}_{1}}\big{(}z;p,h_1\big{)} * f_{{Z}_{2}}\big{(}z;p,h_2\big{)} \\ \nonumber
& \,\,\,\,\,\ * \ldots  *
f_{{Z}_{K}}\big{(}z;p,h_K\big{)},
\end{align}
where $z \triangleq z_{\rm r}+iz_{\rm I} \in \mathbb{C}$, ${\bf{h}}\triangleq [h_1,h_2,\cdots,h_K]^T$, $h_k \triangleq {(h_k)}_{\rm r}+i{(h_k)}_{\rm I}  \in \mathbb{C}$,
and $f_{{Z}_{k}}\big{(}z;p,h_k\big{)}$ is given in \eqref{Theory_1}.
The two-dimensional Laplace transform of $f_{{Z}_{k}}\big{(}z;p,h_k\big{)}$ is expressed as
\begin{align}\nonumber
&F_{Z_k}\big{(}s_1,s_2;p,h_k\big{)} = \int_{-\infty}^{+\infty} \hspace{-0.2em} \int_{-\infty}^{+\infty}\hspace{-0.5em} f_{{Z}_{k}}\big{(}z;p,h_k\big{)} e^{-s_1{z_{\rm r}}-s_2 {z}_{\rm I}}{\rm d}{z}_{\rm r} {\rm d}z_{\rm I}  \\ \nonumber
&=\int_{-\infty}^{+\infty}  \int_{-\infty}^{+\infty}  p \delta(z_{\rm r})\delta({z}_{\rm I}) e^{-s_1 {z_{\rm r}}-s_2 {z_{\rm I}}}{\rm d}{z}_{\rm r} {\rm d}z_{\rm I} \\ \nonumber
& +\int_{-\infty}^{+\infty} \hspace{-0.5em} \int_{-\infty}^{+\infty} \hspace{-0.5em} (1-p) \delta({z_{\rm r}}-{(h_k)}_{\rm r}) \delta(z_{\rm I}-{(h_k)}_{\rm I})
 e^{-s_1 {z_{\rm r}}-s_2 {z_{\rm I}}}{\rm d}{z}_{\rm r} {\rm d}z_{\rm I} \\ \nonumber
&= p + (1-p)  e^{-({(h_{k)}}_{\rm r} s_1+{(h_{k})}_{\rm I}s_2)}.
\end{align}
Using the fact that the linear convolution is equivalent to multiplication in the Laplace domain, we can write
\begin{align}
F_{G}\big{(}s_1,s_2;p,{\bf h}\big{)} & = \prod_{k=1}^{K}  F_{Z_k}\big{(}s_1,s_2;p,h_k\big{)}  \\ \nonumber
& = \prod_{k=1}^{K} \Big{[} p + (1-p) e^{-({(h_k)}_{\rm r}s_1+{(h_k)}_{\rm I}s_2)} \Big{]},
\end{align}
where $F_{G}\big{(}s_1,s_2;p,{\bf h}\big{)}$ denotes the two-dimensional Laplace transform of $f_{G}(g;p,{\bf{h}})$.
Let us consider the multi-binomial theorem as follows \cite{morris1975central}
\begin{align}\label{multi-binomia}
&\prod_{i=1}^{d} (a_i+b_i)^{n_i} \\ \nonumber
&= \sum_{v_1=0}^{n_1} \cdots \sum_{v_d=0}^{n_d} \binom{n_1}{v_1} a_1^{v_1} b_1^{n_1-v_1} \cdots \binom{n_d}{v_d} a_d^{v_d} b_d^{n_d-v_d}.
\end{align}
By substituting $a_i \triangleq  p$, $b_i  \triangleq  (1-p) e^{-({(h_i)}_{\rm r}s_1+{(h_i)}_{\rm I}s_2)}$, $d=K$, and $n_i=1$, $i=1,2,\ldots,d$, into \eqref{multi-binomia}, we obtain
\begin{align} \nonumber
F_{G}& \big{(}s_1,s_2;p,{\bf h}\big{)}  = \sum_{v_1=0}^{1} \cdots \sum_{v_K=0}^{1}
\bigg{[} p^{\sum_{k=1}^{K} v_k} (1-p)^{K-{\sum_{k=1}^{K} v_k}} \\ \label{Lalace2}
& \times \exp{\Big{[}-\sum_{k=1}^{K}(1-v_{k})({({{h}}_{k})}_{\rm r} s_1+{(h_k)}_{\rm I}s_2)\Big{]}} \bigg{]}.
\end{align}
The two-dimensional inverse Laplace transform of the exponential term in \eqref{Lalace2} is given by
\begin{align}\label{Laplac_delta}
&\mathcal{L}^{-1}\bigg{\{}\exp{\Big{[}-\sum_{k=1}^{K}(1-v_{k})({(h_k)}_{\rm r}s_1+{(h_k)}_{\rm I}s_2)\Big{]}} \bigg{\}} \\ \nonumber
& =\delta \Big{(}s_{\rm r}-\sum_{k=1}^{K}(1-v_{k}){(h_k)}_{\rm r} \Big{)} \delta \Big{(}s_{\rm I}-\sum_{k=1}^{K}(1-v_{k}){(h_k)}_{\rm I} \Big{)} \\ \nonumber
&=\delta_{\rm c} \Big{(}s-\sum_{k=1}^{K}(1-v_{k})h_k\Big{)},
\end{align}
where $s=s_{\rm r}+i s_{\rm I} \in {\mathbb C}$.
By taking two-dimensional inverse Laplace transform from both sides of \eqref{Lalace2} and  then employing \eqref{Laplac_delta}, we obtain $f_{G}(g;p,{\bf{h}})$ as in \eqref{delta}.

\section{}\label{Appendix 3}
For $K=2$, we have
\begin{align}
&\bm{\mu} \triangleq [\mu_0 \ \mu_1 \  \mu_2 \ \mu_3]^T  \\ \nonumber
&=[\beta_1 \exp(j\theta_1)+\beta_2 \exp(j\theta_2) \,\,\ \beta_2 \exp(j\theta_2) \,\,\ \beta_1\exp(j\theta_1) \,\,\ 0]^T,
\end{align}
where $\beta_1  > \beta_2$, and for $K=3$, we can write
\begin{align}
\bm{\mu}  & \triangleq [\mu_0 \ \mu_1 \  \mu_2 \ \mu_3 \ \mu_4 \ \mu_5 \  \mu_6 \ \mu_7 ]^T \\ \nonumber
&=\big{[}\beta_1 \exp(j\theta_1)+\beta_2 \exp(j\theta_2)+ \beta_3 \exp(j\theta_3) \\ \nonumber
& \,\,\,\,\,\,\ \beta_2\exp(j\theta_2)+\beta_3\exp(j\theta_3) \,\,\ \beta_1\exp(j\theta_1)+\beta_3\exp(j\theta_3) \\ \nonumber
& \,\,\,\,\,\,\ \beta_3\exp(j\theta_3) \,\,\  \beta_1\exp(j\theta_1)+\beta_2\exp(j\theta_2) \,\,\ \beta_2\exp(j\theta_2) \\ \nonumber
& \,\,\,\,\,\,\ \beta_1\exp(j\theta_1) \,\,\  0 \big{]}^T,
\end{align}
with $\beta_1  > \beta_2  >\beta_3$.
For $K=4$, we can write
\begin{align}\label{4mu}
\bm{\mu}  & \triangleq [\mu_0 \ \mu_1 \  \mu_2 \ \mu_3 \ \mu_4 \ \mu_5 \  \mu_6 \ \mu_7 \ \ldots \ \mu_{13} \  \mu_{14} \ \mu_{15} ]^T \\ \nonumber
&=\big{[}\beta_1 \exp(j\theta_1)+\beta_2 \exp(j\theta_2)+ \beta_3 \exp(j\theta_3) +  \beta_4 \exp(j\theta_4) \\ \nonumber
& \,\,\,\,\,\,\ \beta_2\exp(j\theta_2)+\beta_3\exp(j\theta_3) + \beta_4\exp(j\theta_4) \\ \nonumber
& \,\,\,\,\,\,\ \beta_1\exp(j\theta_1)+\beta_3\exp(j\theta_3) + \beta_4\exp(j\theta_4) \\ \nonumber
& \,\,\,\,\,\,\ \beta_3\exp(j\theta_3)+\beta_4\exp(j\theta_4)\\ \nonumber
& \,\,\,\,\,\,\ \beta_1\exp(j\theta_1)+\beta_2\exp(j\theta_2)+\beta_4\exp(j\theta_4)\\ \nonumber
& \,\,\,\,\,\,\ \beta_2\exp(j\theta_2)+\beta_4\exp(j\theta_4)\\ \nonumber
& \,\,\,\,\,\,\ \beta_1\exp(j\theta_1)+\beta_4\exp(j\theta_4)\,\,\,\,\ \beta_4\exp(j\theta_4) \\ \nonumber
& \,\,\,\,\,\,\ \beta_1\exp(j\theta_1)+\beta_2\exp(j\theta_2) + \beta_3\exp(j\theta_3) \\ \nonumber
& \,\,\,\,\,\,\ \beta_2\exp(j\theta_2)+\beta_3\exp(j\theta_3)  \,\,\,\,\    \beta_1\exp(j\theta_1)+\beta_3\exp(j\theta_3)   \\ \nonumber
& \,\,\,\,\,\,\ \beta_3\exp(j\theta_3) \,\,\,\,\ \beta_1\exp(j\theta_1)+\beta_2\exp(j\theta_2) \,\,\,\,\ \beta_2\exp(j\theta_2) \\ \nonumber
& \,\,\,\,\,\,\ \beta_1\exp(j\theta_1) \,\,\,\,\ 0 \big{]}^T,
\end{align}
where $\beta_1>\beta_2>\beta_3>\beta_4$.

\end{document}